\documentclass[aps,twocolumn,notitlepage,pra,superscriptaddress,footinbib]{revtex4-1}

\usepackage[usenames,dvipsnames]{color}

\usepackage{gensymb} 
\usepackage{physics} 
\usepackage{amsmath} 
\usepackage{amssymb} 
\usepackage{bm} 
\usepackage{bbm} 
\usepackage{braket} 
\usepackage[version=4]{mhchem} 
\usepackage{latexsym} 
\usepackage{tensor}
\usepackage{tcolorbox}
\usepackage{mathtools}
\usepackage{mathrsfs}

\usepackage{amsthm,thmtools}
\declaretheorem[
]{theorem}
\declaretheorem[
]{lemma}

\usepackage{CJKutf8}

\newcommand{\bh}[1]{\hat{\pmb{#1}}} 
\newcommand{\bs}[1]{\pmb{#1}} 

\usepackage{graphicx} 
\graphicspath{{./figures/}} 
\usepackage{stackengine} 
\usepackage[caption=false]{subfig} 
\usepackage[inline]{enumitem} 
\setlist[enumerate,1]{label={(\roman*)}} 
\setlist{nolistsep} 
\usepackage{comment} 

\usepackage{multirow}
\usepackage{hhline}

\usepackage{hyperref}
\usepackage{natbib}
\hypersetup{
 breaklinks=true,
 colorlinks=true,
 linkcolor=blue,
 filecolor=magenta,
 urlcolor=cyan,
}
\usepackage[capitalize]{cleveref}
\crefalias{section}{appendix}

\begin{document}
\begin{CJK}{UTF8}{gbsn}
\renewcommand{\labelenumii}{\theenumii}
\renewcommand{\theenumii}{\theenumi.\arabic{enumii}.}

\newcommand{\QuICS}{Joint Center for Quantum Information and Computer Science, National Institute of Standards and Technology and
 University of Maryland, College Park, Maryland 20742, USA}
\newcommand{\JQI}{Joint Quantum Institute, National Institute of Standards and Technology and
 University of Maryland, College Park, Maryland 20742, USA}
\newcommand{\USNA}{The Volgenau Department of Physics, United States Naval Academy, Annapolis, MD 21402, USA}
\newcommand{\harvard}{Department of Physics, Harvard University, Cambridge, MA 02138, USA}

\newcommand{\thetitle}{Multiparameter function estimation for general Hamiltonians}

\title{\thetitle}

\author{Erfan~Abbasgholinejad}
\affiliation{\QuICS}
\affiliation{\JQI}
\author{Sean~R.~Muleady}
\affiliation{\QuICS}
\affiliation{\JQI}
\author{Jacob~Bringewatt}
\affiliation{\USNA}
\affiliation{\harvard}
\author{Lorcán~O.~Conlon}
\affiliation{\QuICS}
\affiliation{\JQI}
\author{Alexey~V.~Gorshkov}
\affiliation{\QuICS}
\affiliation{\JQI}

\date{\today}

\begin{abstract}

Estimation of physical parameters encoded in a Hamiltonian is a central task in quantum sensing and learning. While the ultimate precision limit for estimating a single parameter coupled to a single generator is well established, the corresponding bound for estimating a function of multiple parameters—each coupled to distinct and possibly non-commuting generators—remains unknown in general. Here, we derive the ultimate quantum limit and present an estimation protocol for any function of parameters in a general Hamiltonian that attains this bound. We show that, although the task is fundamentally a multiparameter problem, our tight bound reduces to an optimized single-parameter quantum Cramér–Rao bound, even for arbitrary generator sets. Our result unifies and extends previous works, providing a general framework for optimal function estimation in quantum systems.
\end{abstract}

 \maketitle
\end{CJK}

\section{Introduction}

Fundamental limits constrain how precisely physical parameters can be inferred from quantum evolution~\cite{giovannetti2011advances,degen_quantum_2017}.
A central result in quantum metrology is that, for a single unknown, time-independent parameter, the achievable sensitivity is set by the corresponding signal generator via the (Helstrom) quantum Cramér–Rao bound, with theoretically optimal protocols saturating this limit~\cite{bollinger_optimal_1996,boixo2007generalized, braunstein1994statistical, pang2014quantum}. A variety of related bounds hold for the more complicated problem of estimating multiple unknown parameters, but, in general settings, the known bounds are either not tight (e.g. multiparameter generalizations of the Helstrom Cram\'er-Rao bound~\cite{helstrom1967minimum,helstrom1968minimum,yuen1973multiple} or the Gill-Massar bound~\cite{gill2005state}) or difficult to compute in closed form (e.g. the Holevo Cram\'er-Rao bound~\cite{holevo1973statistical,holevo2011probabilistic} or the Nagaoka-Hayashi bound~\cite{nagaoka2005new,nagaoka2005generalization,conlon2021efficient}).

Here, we consider the intermediate problem of estimating a single \emph{function} of multiple encoded parameters in the signal Hamiltonian. 
While we seek to estimate a single quantity, each parameter couples to a distinct generator, encompassing a genuine multiparameter problem. In this context, the fundamental question is therefore: \emph{what is the ultimate saturable quantum limit for estimating an arbitrary function of multiple parameters encoded in a general Hamiltonian?} With its single and multi-parameter aspects, it is unclear \emph{a priori} if general bounds for the function estimation problem should be essentially single parameter in nature or more similar to the known multiparameter bounds. Despite progress in specialized cases such as arbitrary commuting generators~\cite{abbasgholinejad2025optimally} or networked sensing architectures~\cite{proctor2017networked,proctor_multiparameter_2018,ehrenberg_minimum_2022,bringewatt_optimal_2024,eldredge_optimal_2018,suzuki2020quantum,Zhang2021:DQSrvw, hamann2022approximate, bate_experimental_2025, bothwell_resolving_2022, komar_quantum_2014, rubio2020quantum, bringewatt_protocols_2021}, the attainable limit for general generators is not known.

In this work, we resolve this problem fully. Generalizing the method in Ref.~\cite{eldredge_optimal_2018}, we show that any function of parameters for any encoding Hamiltonian can be mapped to an \emph{effective single-parameter} estimation problem. We derive the corresponding ultimate precision bound by optimizing over all admissible single-parameter embeddings. We prove that the bound is tight and construct an explicit protocol—based on controlled reshaping of the Hamiltonian and GHZ-like superpositions—that saturates it for any function of interest.

Our results provide a complete and unified characterization of function estimation in quantum systems, extending metrological optimality to arbitrary generator structures and enabling a complete understanding of the single and multi-parameter aspects of the problem. Indeed, while the encoding process depends genuinely on multiple parameters and may involve noncommuting generators, the ultimate saturable limit takes the form of a single-parameter quantum Cramér–Rao bound. Nonetheless, like multiparameter limits such as the Holevo Cram\'er-Rao bound and the Nagaoka-Hayashi bound, the resulting bound is not explicit and requires solving a (possibly non-trivial) optimization problem in the form of a semidefinite program \cite{holevo1973statistical,holevo2011probabilistic, nagaoka2005new,nagaoka2005generalization,conlon2021efficient, albarelli2019evaluating}.

\section{Problem Statement}
Consider a quantum system evolving under a Hamiltonian, dependent on a vector of unknown, time-independent parameters $\bs \theta \in \mathbb{R}^r$:
\begin{equation}
    \hat{H}(s)
    = \hat{H}_0 + \hat{H}_c(s),
    \qquad
    \hat{H}_0 = \sum_{j=1}^{m} f_j(\bs\theta)\, \hat{g}_j ,
    \label{eq:H}
\end{equation}
where $r\leq m$, and $\{f_j(\bs\theta)\}_{j=1}^m$ are twice continuously differentiable real functions. We assume $\{\hat{g}_j\}_{j=1}^m$ are linearly independent, traceless Hermitian generators acting on an $N$-dimensional Hilbert space. The term $\hat{H}_c(s)$ denotes a $\bs \theta$-independent, controllable, time-dependent Hamiltonian acting on the system and any ancillary system.

Rather than estimating the full parameter vector $\bs{\theta}$,  
our goal is to determine the optimal precision to which a single known real function $q(\bs \theta)$ can be estimated, given access to arbitrary state preparation, arbitrarily fast control operations generated
by $\hat{H}_c(s)$, and arbitrary measurements at the end of the evolution.
Given a fixed total evolution time $t$, our figure of merit is the minimal variance of any unbiased estimator of $q$. 

To start, we will restrict our attention to linear functions. In particular, let $f_j(\bs{\theta})=\theta_j$ with $r=m$ and consider the estimation of a linear function $q=\bs{\alpha}\cdot\bs{\theta}$ for a given vector $\bs{\alpha}\in\mathbb{R}^m$. After deriving optimal precision bounds and explicit, saturable protocols for this simplified problem, we demonstrate that the general problem reduces to an equivalent linear problem of the form already considered, with negligible overhead asymptotically in time.

\section{Derivation of the Lower Bound}\label{sec:qcrb}
We first derive the lower bound for the linear function estimation problem with arbitrary generators. 
In prior works that considered special cases of our general function estimation problem (and always with commuting generators)~\cite{eldredge_optimal_2018,proctor2017networked,proctor_multiparameter_2018,ehrenberg_minimum_2022,bringewatt_optimal_2024}, precision bounds were derived in terms of the tightest possible \emph{single-parameter} quantum Cramér–Rao bound for the quantity $q=\bs\alpha\cdot\bs{\theta} = \sum_{j=1}^m \alpha_j\theta_j$, under the assumption that the remaining $m-1$ unknown degrees of freedom in the problem are fixed. Because this approach assumes we possess additional information that we do not actually have (namely, the values of the fixed parameters), one obtains a lower bound on the minimum achievable sensitivity. By maximizing over all possible choices for fixing these extra degrees of freedom one obtains the tightest such bound. Saturability of the resulting bound can be established either via algebraic conditions~\cite{ehrenberg_minimum_2022} or through construction of an explicit, optimal protocol achieving this sensitivity. We adopt a similar strategy for our more general problem, taking the latter approach to establishing saturability.

To begin we rewrite the encoding Hamiltonian as
\begin{align}\label{eq:H_new_basis}
\hat{H}(s) &=\sum_{j=1}^m ({\bs{\alpha}}^{(j)}\cdot{\bs{\theta}})({\bs{\beta}}^{(j)}\cdot\hat{\bs{g}})+\hat{H}_c(s),
\end{align}
where $\{{\bs{\alpha}}^{(j)}\}_{j=1}^m$ are some choice of basis vectors for $\mathbb{R}^m$ such that ${\bs{\alpha}}^{(1)}={\bs{\alpha}}$, and $\{{\bs{\beta}}^{(j)}\}_{j=1}^m$ forms a dual basis such that ${\bs{\alpha}}^{(i)}\cdot{\bs{\beta}}^{(j)}=\delta_{ij}$. That is, we have performed a basis transformation ${\bs{\theta}}\rightarrow\bs{q}$, where $q_j:={\bs{\alpha}}^{(j)}\cdot{\bs{\theta}}$, with $q_1=q$ our function of interest.

If we assume that $q_j$ for $j>1$ are known, the generator for $q$ is simply
\begin{equation}\label{eq:generator}
\hat{g}_{q,{\bs{\beta}}}:=\frac{\partial \hat{H}}{\partial{q}}\Bigg|_{q_2,\cdots,q_m}={\bs{\beta}}\cdot\hat{\bs{g}},
\end{equation}
where $\bs{\beta} := {\bs{\beta}}^{(1)}$. Thus, the choice of fixing the extra degrees of freedom corresponds to the choice of $\bs{\beta}$.
Optimizing over $\bs{\beta}$, we obtain the tightest single parameter quantum Cramér–Rao bound for the function $q$ as
\begin{align}\label{eq:QCRB_linear}
\mathrm{Var}(q_\mathrm{est})\geq  \max_{{\bs{\beta}}} \frac{1}{\mathcal{F}(q|{\bs{\beta}})} \geq \max_{{\bs{\beta}}}\frac{1}{t^2\norm{\hat{g}_{q,{\bs{\beta}}}}_s^2}.
\end{align}
Here, $q_\mathrm{est}$ is an unbiased estimator for $q$, and $\mathcal{F}(q|{\bs{\beta}})$ is the quantum Fisher information with respect to the parameter $q$ with generator $\hat{g}_{q,\bs{\beta}}$.
The quantum Fisher information can be bounded by the state-independent quantity $\norm{\hat{g}_{q,{\bs{\beta}}}}_s^2$, where $\norm{\cdot}_s$ is the operator seminorm defined as the difference between the maximum and minimum eigenvalues \cite{boixo2007generalized}.
Thus, determining the lower bound on the attainable sensitivity reduces to the following optimization problem:
\begin{align}\label{eq:minprob_app1}
    \text{min}_{\bs{\beta}} \quad & \norm{{\bs{\beta}} \cdot \hat{\bs g}}_{s}, \nonumber\\
    \text{s.t.} \quad & {\bs{\alpha}}\cdot{\bs{\beta}}=1.
\end{align}
This is a semidefinite program (SDP) related, via its dual, to the minimization problem (details in \cref{app:qcrb})
\begin{align}\label{eq:bound_1norm}
    \gamma(\bh{g}|\bs{\alpha}) := \text{min}_{\hat{A}} \quad & \|\hat{A}\|_1/2, \nonumber\\
    \text{s.t.} \quad &
    \Tr\left(\hat{A}\hat g_j\right) = \alpha_j, \quad j=1,\dots,m, \nonumber\\
    & \Tr\left(\hat{A}\right)=0,
\end{align}
where $\hat{A}$ is an $N\times N$ Hermitian matrix, $\|\hat{A}\|_1 = \tr(\sqrt{\hat{A}^\dag\hat{A}})$ is the trace-norm, and $\gamma(\bs{g}|\bs{\alpha})$ is the solution to the SDP with generators $\bh{g}$ conditioned on $\bs{\alpha}$. 

Since $\hat A$ is Hermitian, it is diagonalizable in some orthonormal basis  
$\mathcal{K}=\{\ket{k}\}_{k=0}^{N-1}$, with $\hat A = \sum_k a_k\,\dyad{k}$.
Thus, for any fixed choice of basis $\mathcal{K}$, the optimization in
Eq.~\eqref{eq:bound_1norm} reduces to a linear program over the set of its diagonal entries, $\mathcal{A}_{\mathcal{K}} = \{\bs a \in \mathbb{R}^N| \sum_k a_k \bra{k}\hat{g}_j\ket{k} = \alpha_j \;  \forall j, \sum_k a_k=0\}$. The quantum Cramér–Rao bound may then be rewritten as
\begin{equation}
\label{eq:QCRB}
    \mathrm{Var}(q_{\mathrm{est}}) \ge \frac{\gamma^2(\bs{g}|\bs{\alpha})}{t^2}
    \;=\;
    \min_{\mathcal{K}}
    \;\min_{\bs a\in\mathcal{A}_\mathcal{K}}
        \frac{\|\bs a\|_1^{\,2}}{4t^{2}}.
\end{equation}
The outer minimization over $\mathcal{K}$ is implicitly restricted to those bases for which the constraints admit at least one feasible solution, i.e., for which $\mathcal{A}_\mathcal{K} \neq \varnothing$. The tracelessness and linear independence of the generators ensure that such bases always exist.

For a fixed basis $\mathcal{K}$, the matrices  
$\hat g_j' := \sum_k \bra{k}\hat g_j\ket{k}\,\dyad{k}$  
are diagonal and therefore mutually commuting.  
Equation~\eqref{eq:QCRB} is precisely the commuting-generator bound in Ref.~\cite{abbasgholinejad2025optimally} applied to the effective generators $\{\hat g_j'\}$ in basis $\mathcal{K}$ (see Table~\ref{tab:comparison}).  
If the original generators $\{\hat g_j\}$ commute, the minimizing basis is simply the common eigenbasis.

An important implication of this connection to the commuting bound  is that if we can project the Hamiltonian onto its diagonal in the optimal basis $\mathcal{K}$, then the problem becomes equivalent to one with commuting effective generators $\hat g_j'$ as mentioned above. In this commuting setting, the optimal protocol of~\cite{abbasgholinejad2025optimally} applies directly and saturates the corresponding bound. In the next section, we discuss how such a diagonal projection can be implemented.

\begin{table*}[t]
\centering
\begin{tabular}{|c|c|c|c|}
\hline
\textbf{Setting} 
& \textbf{Hamiltonian} 
& \textbf{Quantity Estimated} 
& \textbf{Precision Bound} \\ 
\hline\hline

\textbf{Single generator~\cite{boixo2007generalized}}
& $\displaystyle \hat{H} = \theta\,\hat{g}$ 
& $\displaystyle q=\theta$
& $\displaystyle 
    \frac{1}{t^{2}\, \|\hat{g}\|_{s}^{2}}
$ 
\\ \hline

\textbf{Single-qubit generators~\cite{eldredge_optimal_2018}}
& $\displaystyle \hat{H}=\sum_{j=1}^{m} \theta_j\, \hat{Z}_j$
& $\displaystyle q=\bs{\alpha}\!\cdot\!\bs{\theta}$
& $\displaystyle 
    \frac{\|\bs \alpha\|^2_\infty}{4t^{2}} $ 
\\ \hline

\textbf{Commuting generators~\cite{abbasgholinejad2025optimally}}
& $\displaystyle \hat{H}=\sum_{j=1}^{m} \theta_j\, \hat{g}_j, \: [\hat{g}_i, \hat{g}_j] = 0$
& $\displaystyle q=\bs{\alpha}\!\cdot\!\bs{\theta}$
& $\displaystyle 
    \min_{\bs a\in\mathcal{A}_\mathcal{K}}
        \frac{\|\bs a\|_1^{\,2}}{4t^{2}} $ (\cref{eq:QCRB})
\\ \hline

\textbf{Arbitrary generators} [this work]
& $\displaystyle \hat{H}=\sum_{j=1}^{m} \theta_j\, \hat{g}_j$
& $\displaystyle q=\bs{\alpha}\!\cdot\!\bs{\theta}$
& $\displaystyle 
    \frac{\gamma^2(\bs{g}|\bs{\alpha})}{t^2} $ (\cref{eq:QCRB})
\\ \hline

\textbf{Arbitrary functions} [this work]
& $\displaystyle \hat{H}=\sum_{j=1}^{m} f_j(\bs \theta)\, \hat{g}_j$
& $\displaystyle q(\bs \theta)$
& $\displaystyle 
    \frac{\gamma^2\big(F\bs{g}|\nabla q \big)}{t^2}$ (\cref{eq:QCRB_general})
\\ \hline

\end{tabular}
\caption{
Comparison of optimal precision bounds for estimating functions of
Hamiltonian parameters.  
The single-parameter case depends on the seminorm 
$\|\hat{g}\|_s$.  
Our work provides the saturable optimal bound for arbitrary, possibly 
noncommuting, generator sets, expressed through an SDP.}
\label{tab:comparison}
\end{table*}

\section{Projecting a Hamiltonian onto its Diagonal in an Arbitrary Basis}\label{sec:Hamiltonian_reshaping}
Here, we introduce a method to project a Hamiltonian onto its diagonal elements in an arbitrary orthonormal basis $\mathcal{K} = \{\ket{k}\}_{k=0}^{N-1}$. As discussed earlier, the goal is to transform the generators of the Hamiltonian into commuting ones. Thus, by the bound in Eq.~\ref{eq:QCRB}, one can employ the optimal protocol introduced in Ref.~\cite{abbasgholinejad2025optimally}. 

In the basis $\mathcal{K}$, we can write the original Hamiltonian as
\begin{equation}
    \hat{H}_0 = \sum_{k,k'} h_{k, k'} \dyad{k}{k'},
\end{equation}
where $h_{k, k'} = \bra{k} \hat{H}_0 \ket{k'}$. The goal is to construct a finite family of unitaries $\mathcal{S} =\{\hat{U}_s\}_{s=0}^{N-1}$, such that the averaged conjugation of $\hat{H}_0$ by the elements of $\mathcal{S}$ removes all off-diagonal terms of $\hat{H}$, yielding its diagonal part:
\begin{equation}
    \hat{H}_{\mathrm{eff}}=\frac{1}{|\mathcal{S}|}\sum_{s\in S} \hat{U}_{s}^{\dagger} \hat{H}_0 \hat{U}_{s}
    = \sum_k h_{k,k}\,|k\rangle\langle k|.
    \label{eq:twirl}
\end{equation}

One such choice is provided by diagonal unitaries whose phases form the character set of the cyclic group $\mathbb{Z}_N$, i.e.
\begin{equation}
    \hat{U}_{s} = \sum_{k=0}^{N-1} \omega^{s k}\,|k\rangle\langle k|,
    \quad \omega = e^{2\pi i/N}, \quad s=0,\dots,N-1.
    \label{eq:discrete_unitaries}
\end{equation}
To see this, note that conjugation of $\hat{H}_0$ by $\hat{U}_{s}$ introduces a relative phase, $\hat{U}_{s}^{\dagger} \hat{H}_0 \hat{U}_{s}
    = \sum_{k,k'} h_{k,k'}\, \omega^{s(k'-k)}\,|k\rangle\langle k'|$. By averaging this expression over $s$, we can exploit the orthogonality of the characters, $\sum_{s=0}^{N-1}\omega^{s(k'-k)} = N\delta_{k,k'}$, to cancel all off-diagonal terms and obtain Eq.~\eqref{eq:twirl}.
Thus, the discrete family of unitaries in Eq.~(\ref{eq:discrete_unitaries}) implements an exact \emph{dephasing channel} in the $\{|k\rangle\}$ basis, projecting any operator onto its diagonal components.

With access to time evolution under $\hat{H}_0$, we can approximate the evolution under $\hat{H}_{\rm eff}$ via a randomized Trotterization protocol, such as QDRIFT~\cite{ma_learning_2024, chen_concentration_2021, campbell_random_2019, childs2019faster}. As in Ref.~\cite{abbasgholinejad2025optimally}, we divide our evolution time into $L = t/\Delta t$ Trotter steps of duration $\Delta t$. During the $l$-th step, we evolve under the conjugated Hamiltonian $\hat{U}^\dagger_{s_l} \hat{H}_0\hat{U}_{s_l}$, where $\hat{U}_{s_l}$ is randomly sampled uniformly from $\mathcal{S}$. This conjugated evolution can be generated using fast pulses from the control Hamiltonian to apply $\hat{U}_{s_l}$ at the start of this interval, evolving with $\hat{H}_0$ for $\Delta t$, and applying $\hat{U}_{s_l}^\dagger$ at the end. The resulting evolution is described by
the randomized product formula 
\begin{align}
\hat{V} &= \prod_{l=1}^L \hat{U}_{s_l}^{\dagger}e^{-i  \hat{H}_0  \Delta t}\hat{U}_{s_l}
= \prod_{l=1}^L e^{-i \hat{U}_{s_l}^{\dagger} \hat{H}_0 \hat{U}_{s_l} \Delta t} .\label{eq:QDRIFT}
\end{align}
With high probability, this approximates the target unitary $\hat{U} = e^{-i\hat{H}_{\rm eff}t}$ to additive error $\epsilon$ when the number of steps satisfies
\begin{equation}\label{eq:trotter_steps} L = \Omega({\log(N)\lambda^2t^2}/{\epsilon^2}), 
\end{equation}
where $\lambda=\|\hat{H}_0\|$ for spectral norm $\|\cdot\|$; that is $\|\hat{U} - \hat{V}\|\leq \epsilon$~\cite{chen_concentration_2021}. Notably, despite $N$ possible elements in $\mathcal{S}$, the randomized product formula requires only logarithmic dependence on $N$ to achieve the approximate evolution.

We emphasize that \cref{eq:trotter_steps} applies to an arbitrary Hamiltonian; for an $n$-qubit system, it may, therefore, involve an exponential number of generators in $n$. However, in many practical scenarios, the Hamiltonian is generated by only a polynomial number of terms in \(n\), and, thus, one expects that a smaller, structured set of unitaries suffices to achieve the same task, as discussed in Ref.~\cite{huang_learning_2023, abbasgholinejad2025optimally}. Moreover, in the regime where $|\mathcal{S}| = \mathrm{poly}(n)$, one may employ deterministic constructions such as $p$-th order Trotter formulas~\cite{childs2021theory}. These schemes can yield improved scaling with the evolution time $t$, at the expense of a larger prefactor that typically grows with the system size.

\section{Optimal Protocol for Saturating the quantum Cramér–Rao bound}
As discussed in Sec.~\ref{sec:qcrb}, the reshaping procedure in the previous section effectively maps the original generator set $\{\hat{g}_j\}$ to a set of commuting generators $\{\hat{g}'_j\}$. This reduction enables us to apply the optimal commuting-generator protocol of Ref.~\cite{abbasgholinejad2025optimally}, which saturates the bound in Eq.~\eqref{eq:QCRB}. Note that our goal here is to establish attainability in principle; whether comparable performance can be realized under stricter experimental constraints remains an open question.

The protocol is constructed from the optimal dual variable $\hat{A}$ solving problem~\eqref{eq:bound_1norm}. Since $\hat{A}$ is Hermitian, it can be diagonalized in an orthonormal basis $\mathcal{K}=\{\ket{k}\}_{k=1}^{N}$ as
$\hat A = \sum_k a_k\,\dyad{k}$.
Next, we separate the eigenvectors according to the sign of their eigenvalues, defining
$\mathcal{X}=\{k\,|\,a_k>0\}$, and $\mathcal{Y}=\{k\,|\,a_k<0\}$.
In this notation,
\begin{equation}
    \hat{A} = \sum_{x \in \mathcal{X}} a_x \dyad{x} + \sum_{y \in \mathcal{Y}} a_y \dyad{y},
\end{equation}
so that $\hat{A}$ is decomposed into its positive and negative eigenspaces.

The protocol consists of evolving under the Hamiltonian in Eq.~\eqref{eq:twirl} while remaining within the family of states
$\ket{\psi_{x,y}(\phi)} \equiv \bigl(\ket{x}+e^{i\phi}\ket{y}\bigr) /\sqrt{2}$ for  $x\in \mathcal{X}$, and $y \in \mathcal{Y}$.
Since $\ket{x}$ and $\ket{y}$ are eigenstates of the Hamiltonian in Eq.~\eqref{eq:twirl}, the evolution changes only the relative phase $\phi$ and preserves the form of the state. The protocol proceeds by coherently switching between different states $\ket{\psi_{x,y}(\phi)}$. At appropriate times, this requires replacing $x\to x'$ and/or $y\to y'$. One unitary that implements such a change is
\begin{align}
    \hat{S}_{k,k'} &= \ket{k}\bra{k'} + \ket{k'} \bra{k} + \sum_{k''\neq k,k'} \ket{k''}\bra{k''}.
\end{align}
Because $\ket{x}$ and $\ket{y}$ are orthogonal, $\hat{S}_{x,x'}$ acts only on the left branch of $\ket{\psi_{x,y}(\phi)}$, whereas $\hat{S}_{y,y'}$ acts only on the right branch.

The switching times are chosen so as to accumulate a relative phase between the two branches of the GHZ-like state proportional to $q$. Specifically, we apply the swaps $\hat{S}_{x,x'}$ sequentially after time intervals of length $a_x t/(\sum_{x\in \mathcal{X}}a_x)$, covering all elements of $\mathcal{X}$. Simultaneously, we apply the swaps $\hat{S}_{y,y'}$ sequentially after intervals of length $a_y t/(\sum_{y\in \mathcal{Y}}a_y)$. With this choice of switching times, each branch of the superposition evolves for a total time $t$ under the effective Hamiltonian.

After the full sequence, the state is $\ket{\psi_{x'',y''}(\Phi)}$ up to an irrelevant global phase, where the relative phase is determined by the timing of the control operations:
\begin{align}
    \Phi 
    &= \sum_x \left(\frac{a_x}{\sum_{x'} a_{x'}} t\right)\, h_{x,x} - \sum_y \left(\frac{a_y}{\sum_{y'} a_{y'}} t\right)\, h_{y,y} \nonumber\\
    &= \frac{2t}{\|\hat{A}\|_1} \sum_j \theta_j \,\Tr\!\left(\hat{A}\hat g_j\right) 
     =  \frac{qt}{\gamma(\bs{g}|\bs{\alpha})},
\end{align}
where the second equality uses
$\|\hat{A}\|_1 = 2\sum_{x\in \mathcal{X}}a_x = 2|\sum_{y\in \mathcal{Y}}a_y|$ since $\Tr(\hat A)=0$.
Finally, measuring the operator $i\dyad{x''}{y''} + \text{h.c.}$ gives an estimator for $q$ with variance
\begin{equation}
    \mathrm{Var}(q_{\mathrm{est}}) 
    = \frac{\gamma^2(\bs{g}|\bs{\alpha})}{t^2},
\end{equation}
thereby saturating the bound in Eq.~\eqref{eq:QCRB}. Note that achieving this variance requires the system to evolve under the reshaped Hamiltonian. In general, however, the reshaping procedure described in Sec.~\ref{sec:Hamiltonian_reshaping} introduces an error that can bias the estimator. This bias can be made arbitrarily small by increasing $L$, so that its contribution is negligible compared with the optimal variance \cite{abbasgholinejad2025optimally}.

We emphasize that the number of control switches during the protocol is less than the rank of the optimal matrix $\hat{A}$. For any fixed basis $\mathcal{K}$, the optimization over $\bs a$ in \cref{eq:QCRB} is a linear program with $m+1$ linear constraints.  
By standard results on basic feasible solutions of linear programs \cite{bertsimas1997introduction, chvatal1983linear, boyd2004convex} 
(Also Theorem~1 in Ref.~\cite{abbasgholinejad2025optimally}),  there always exists an optimal solution with at most $m+1$ nonzero entries in $\bs a$.  
Since this holds for every choice of basis $\mathcal{K}$ with $\mathcal{A}_\mathcal{K}\neq \varnothing$, it follows that an optimal $\hat A$ can always be chosen with
\begin{equation}\label{eq:A_rank}
\mathrm{rank}(\hat A)\le m+1.
\end{equation}
This means that $|\mathcal{X}| + |\mathcal{Y}| \leq m+1$, and  the number of coherent switches during the protocol is less than $m$.

\section{Extension to arbitrary functions and couplings}\label{sec:QCRB_general}
In this section, we extend the bound we derived to the case of arbitrary parameter-dependent couplings and general functions of the parameters. Our approach follows and generalizes the method of Ref.~\cite{qian2020optimal}.
We consider a Hamiltonian of the form
\begin{equation} 
    \hat{H}_0 = \sum_{j=1}^m f_j(\bs{\theta}) \hat{g}_j,
\end{equation}
where each $f_j(\bs{\theta})$ is an arbitrary, sufficiently smooth function of the encoded parameters $\bs{\theta}\in\mathbb{R}^r$. We assume that the values of $\{f_j(\bs{\theta})\}_{j=1}^m$ uniquely determine $\bs{\theta}$, implying $r\le m$.
Our goal is to estimate a single arbitrary function $q(\bs{\theta})$ with minimum variance. As in Sec.~\ref{sec:qcrb}, we reduce the problem to an effective single-parameter estimation by fixing the remaining degrees of freedom and applying the tightest single-parameter bound. As shown in Appendix~\ref{app:qcrb}, the resulting generalized bound takes the form
\begin{equation}\label{eq:QCRB_general}
\mathrm{Var}(q_{\mathrm{est}})
\ge \frac{\gamma^2(F\bh{g}\mid \nabla q)}{t^2},
\end{equation}
where $\nabla q$ denotes the gradient of $q$ with respect to $\bs{\theta}$, and $F$ is the Jacobian matrix with matrix elements
\begin{equation}
F_{ij} = \frac{\partial f_j(\bs{\theta})}{\partial \theta_i}.
\end{equation}
At first glance, the bound in \cref{eq:QCRB_general} appears noninformative, since both $\nabla q$ and $F$ generally depend on the unknown parameters $\bs{\theta}$. However, inspired by Ref.~\cite{qian2020optimal}, this bound can be asymptotically saturated using a two-step adaptive protocol. The total interrogation time $t$ is divided into two stages, $t=t_1+t_2$. In the first stage, time $t_1$ is used to obtain a coarse estimate $\bs{\tilde{\theta}}$ of the parameters with $\|\bs \theta-\tilde{\bs \theta}\|_\infty = O(1/t_1)$ using any Hamiltonian learning algorithm with Heisenberg-limited time scaling~\cite{ma_learning_2024, huang_learning_2023, Hu2025:ansatzfreeHamLearn, brahmachari2026learning, Bakshi2024:HamLearn}. This allows to linearize both $q(\bs{\theta})$ and the couplings $f_j(\bs{\theta})$ around the estimated parameter values, reducing the problem to the linear setting of Sec.~\ref{sec:qcrb}. In particular, we can change the Hamiltonian into
\begin{equation} \label{eq:linearized_H}
    \delta \hat{H}_0 = \sum_{j} \delta\theta_j (\tilde F\hat{\bs g})_j + O\left(\frac{1}{t_1^2}\right),
\end{equation}
where $\tilde F:=F(\bs{\tilde\theta})$. 
The function of interest becomes
\begin{equation}\label{eq:linearized_q}
q(\bs{\theta}) = q(\tilde{\bs \theta}) + \nabla \tilde q \cdot \delta\bs{\theta} + O\left(\frac{1}{t_1^2}\right),
\end{equation}
where $\nabla\tilde{q}:=\nabla q(\bs{\tilde\theta})$.
Thus, applying the optimal linear protocol for time $t_2$, we can learn the function $q$ with the variance in Eq.~\eqref{eq:QCRB_general}. 

We emphasize that attaining this variance requires the linearization remainders in Eqs.~\eqref{eq:linearized_H} \eqref{eq:linearized_q} to contribute negligibly to the final estimation error. In particular, the bias and any excess variance arising from the coarse first-stage estimate must vanish relative to the leading term. As shown in Appendix~\ref{app:general_error}, in the asymptotic limit of large $t$, this is ensured by the simultaneous conditions $t_2/t_1^2\to 0$, and $t_1/t_2\to 0$. Thus, we need to choose $t_2\sim t$, and $t_1\sim t^p$ with $1/2< p < 1$.

\section{Discussion and Outlook}

To date, it has generally been difficult to formulate general, optimal protocols for the task of multiparameter estimation \cite{ragy2016compatibility,sidhu2021tight, albarelli2022probe}. In contrast, corresponding single-parameter estimation problems are relatively well understood, with numerous strategies available contending with limited control or relevant noise sources \cite{zhou2018achieving, demkowicz2017adaptive, pang2017optimal, yuan2015optimal, kaubruegger2025lieb}. Within the context of distributed sensing, the restricted task of learning a \emph{single} function of multiple parameters, as we have considered here, occupies a peculiar place. On the one hand, as this task amounts to the extraction of a single quantity from the underlying system, a natural inclination is to surmise that relevant bounds and strategies for single parameter sensing may well apply, up to some suitable weighting procedure or basis transformation. On the other hand, as repeatedly noted in prior works investigating various special cases of this problem~\cite{eldredge_optimal_2018,proctor_multiparameter_2018,gross_one_2020,bringewatt_protocols_2021,bringewatt_optimal_2024,abbasgholinejad2025optimally}, such an assumption is superficial, with the underlying dependence of our desired quantity on multiple, independent parameters affording a genuine multiparameter character to the problem.  

Our derived bounds and optimal protocol, which generally subsume these prior works, elucidate this distinction. To derive the bound in Eq.~\eqref{eq:QCRB}, we resorted to the use of the single-parameter Cram{\'e}r-Rao bound for a generator of the desired function, analogous to the derivation in Ref.~\cite{eldredge_optimal_2018} for independent generators and Ref.~\cite{abbasgholinejad2025optimally} for the case of commuting generators (see Table~\ref{tab:comparison}). In all these cases, the key distinction from a genuine single-parameter problem is that the signal generator is not necessarily unique, and depends on our assumptions about which additional degrees of freedom we assume to be fixed in our problem.
Thus, the function estimation task corresponds to a whole family of single-parameter problems, and identifying the ultimate quantum limits fundamentally relies on our ability to identify the optimal such problem. Additionally, the existence of a bound corresponding to some choice of fixing extra degrees of freedom does not a priori imply that we can, in practice, ``fix'' these quantities and, thus, achieve the associated bound. Optimal protocol design corresponds to identifying these quantities as nuisance parameters that can be effectively eliminated through a suitable control sequence~\cite{eldredge_optimal_2018,abbasgholinejad2025optimally}. 

The generalization to arbitrary, non-commuting generators injects an additional complexity into this picture: different choices for the signal generator do not necessarily commute, and thus generically posses incompatible eigenbases. Thus, our identification of the optimal single parameter problem requires the optimization of all single parameter generators in \emph{all possible bases}. In Appendix \ref{app:examples}, we explore simple examples of this optimization---the resulting optimal basis generally exhibits a highly nontrivial dependence on our function parameters and underlying generators.

Once such a basis and corresponding optimal generator are identified, there are two distinct sources of noise that potentially reduce the sensitivity of any sensing protocol. The first are nuisance parameters corresponding to off-diagonal contributions in the optimal signal basis. Our saturating protocol eliminates these through Hamiltonian reshaping techniques (QDRIFT), adopting a generalization of our approach in the case of commuting generators~\cite{abbasgholinejad2025optimally}. The second noise corresponds to nuisance parameters that are diagonal in the optimal signal basis, which we address with our prior strategy (see Ref.~\cite{abbasgholinejad2025optimally}) of careful, dynamical swapping of basis states so that the effects of these nuisance parameters are exactly canceled in the final phase accumulation.

Whether more experimentally relevant protocols can achieve the same sensitivity with substantially less control remains unknown, and it would be valuable to quantify more precisely the role and minimal required degree of control in function-estimation protocols of this kind. Relatedly, it would be interesting to understand how both the achievable protocols and the associated bounds are modified when explicit restrictions on the available control are imposed. A further practical challenge is the computational complexity of identifying the optimal basis. While this is not itself a fundamental metrological resource, it may nevertheless constitute a prohibitive bottleneck for large systems, except in settings where additional structure or symmetry can be exploited. It would therefore be especially interesting to characterize broader classes of structured generator families for which $\gamma(\hat{\bs g}\mid \bs\alpha)$ admits analytic expressions, or at least provably efficient approximations, at large system sizes. 

Finally, several natural extensions of our results remain open. One is to treat time-dependent signal Hamiltonians $\hat H_0(t,\bs\theta)$. Another is to move beyond the estimation of a single function and study the simultaneous estimation of multiple functions $q^{(a)}(\bs\theta)$, including the tradeoffs and optimal allocation of resources that arise when the corresponding optimal protocols are incompatible.

\textit{Acknowledgements.---}
We thank Anthony J. Brady, Yuxin Wang, and Ali Fahimniya for insightful discussions. E.A., L.O.C., and A.V.G.~were supported in part by ONR MURI, AFOSR MURI, NSF QLCI (award No.~OMA-2120757), DoE ASCR Quantum Testbed Pathfinder program (award No.~DE-SC0024220), NSF STAQ program, DARPA SAVaNT ADVENT, ARL (W911NF-24-2-0107), and NQVL:QSTD:Pilot:FTL. E.A., L.O.C., and A.V.G.~also acknowledge support from the U.S.~Department of Energy, Office of Science, National Quantum Information Science Research Centers, Quantum Systems Accelerator (Award No. DE-SCL0000121) and from the U.S.~Department of Energy, Office of Science, Accelerated Research in Quantum Computing, Fundamental Algorithmic Research toward Quantum Utility (FAR-Qu).
 S.R.M. is supported by the
NSF QLCI (award No. OMA-2120757). J.B. notes that the views expressed in this work are those of the author and do not reflect the official policy or position of the United States Naval Academy or any department of the United States Government.

\bibliography{ref}
\ \ 

\newpage
\appendix
\onecolumngrid

\newpage
\appendix

\section{Optimized quantum Cramér–Rao bound and SDP duality}
\label{app:qcrb}
In this Appendix we derive the optimized quantum Cram\'er--Rao bound used in Eqs.~\eqref{eq:QCRB}, \eqref{eq:QCRB_general} of the main text. Consider a controlled Hamiltonian
\begin{equation}
\hat H(s) = \hat H_0(\bs\theta) + \hat H_c(s),
\qquad
\hat H_0(\bs\theta) = \sum_{j=1}^m f_j(\bs\theta)\,\hat g_j,
\label{eq:H_app_general}
\end{equation}
where \(\bs\theta\in\mathbb{R}^r\) (\(r\le m\)) is unknown,
\(\hat H_c(s)\) is a known control Hamiltonian,
and each \(f_j(\bs\theta)\) is assumed twice-differentiable.
We aim to estimate a scalar function \(q(\bs\theta)\) with minimum variance.
The following bound holds for any locally unbiased estimator \(q_{\mathrm{est}}\).

\paragraph*{Step 1: effective single-parameter generator.}
Fix a point \(\bs\theta\) and define the Jacobian
\begin{equation}
F_{ij}(\bs\theta) := \frac{\partial f_j(\bs\theta)}{\partial \theta_i},
\qquad i=1,\dots,r,\ \ j=1,\dots,m,
\label{eq:Jacobian_def}
\end{equation}
and the gradient \((\nabla q)_i := \partial q/\partial \theta_i\).
We choose a local reparametrization \((q_1,\dots,q_r)\) with \(q_1=q(\bs\theta)\),
and consider variations in \(q_1\) while holding \(q_2,\dots,q_r\) fixed.
Along such a one-dimensional curve, the parameter variation satisfies
\begin{equation}
\frac{\partial \bs\theta}{\partial q_1} = \bs\beta,
\qquad
\nabla q \cdot \bs\beta = 1,
\label{eq:beta_constraint_app}
\end{equation}
where \(\bs\beta\in\mathbb{R}^r\) depends on the choice of fixed coordinates
\(q_2,\dots,q_r\).

By the chain rule, the generator of translations in \(q_1\) is
\begin{align}
\hat g_{q,\bs\beta}
&:= \frac{\partial \hat H_0}{\partial q_1}\Big|_{q_2,\dots,q_r}
= \sum_{i=1}^r \beta_i \frac{\partial \hat H_0}{\partial \theta_i}
= \sum_{i=1}^r \sum_{j=1}^m \beta_i\,F_{ij}(\bs\theta)\,\hat g_j.
\label{eq:gq_general_app}
\end{align}
It is convenient to define the \(r\) effective generators
\begin{equation}
\hat h_i := \frac{\partial \hat H_0}{\partial \theta_i}
= \sum_{j=1}^m F_{ij}(\bs\theta)\,\hat g_j,
\qquad i=1,\dots,r,
\label{eq:hi_def}
\end{equation}
so that \(\hat g_{q,\bs\beta} = \bs\beta\cdot \hat{\bs h}\).

\paragraph*{Step 2: optimized single-parameter bound.}
For unitary encoding generated by \(\hat g_{q,\bs\beta}\) (with arbitrary known controls),
the quantum Fisher information satisfies the standard bound~\cite{boixo2007generalized}
\begin{equation}
\mathcal F(q\mid \bs\beta)\ \le\ t^2\,\|\hat g_{q,\bs\beta}\|_s^2,
\end{equation}
implying the single-parameter quantum Cramér–Rao bound
\begin{equation}
\mathrm{Var}(q_{\mathrm{est}})
\ \ge\
\frac{1}{\,\mathcal F(q\mid \bs\beta)}
\ \ge\
\frac{1}{t^2\,\|\hat g_{q,\bs\beta}\|_s^2}.
\label{eq:sp_qcrb_app}
\end{equation}
Since the choice of the fixed coordinates \((q_2,\dots,q_r)\) is arbitrary, we may optimize
over all \(\bs\beta\) obeying \eqref{eq:beta_constraint_app}, obtaining
\begin{align}
\mathrm{Var}(q_{\mathrm{est}})
&\ge
\max_{\bs\beta:\,\nabla q\cdot\bs\beta=1}
\frac{1}{t^2\,\|\bs\beta\cdot \hat{\bs h}\|_s^2}
=
\frac{1}{t^2}\left(
\min_{\bs\beta:\,\nabla q\cdot\bs\beta=1}
\|\bs\beta\cdot \hat{\bs h}\|_s
\right)^{-2}.
\label{eq:general_bound_beta_app}
\end{align}
Defining
\begin{equation}
\gamma(\hat{\bs h}\mid \nabla q)
:=
\left(
\min_{\bs\beta:\,\nabla q\cdot\bs\beta=1}
\|\bs\beta\cdot \hat{\bs h}\|_s
\right)^{-1},
\label{eq:gamma_general_def}
\end{equation}
the optimized bound reads
\begin{equation}
{
\mathrm{Var}(q_{\mathrm{est}})
\ \ge\
\frac{\gamma^2(\hat{\bs h}\mid \nabla q)}{t^2}.
}
\label{eq:QCRB_general_app}
\end{equation}

Next, we express the seminorm minimization in \cref{eq:gamma_general_def} as an SDP.
For Hermitian \(\hat G\), the identity
\begin{equation}
\|\hat G\|_s
=
\min_{u,v\in\mathbb{R}}\ \{v-u:\ u\hat I \preceq \hat G \preceq v\hat I\}
\label{eq:seminorm_as_sdp}
\end{equation}
holds because the smallest feasible \(v\) and largest feasible \(u\) are
\(v=\lambda_{\max}(\hat G)\) and \(u=\lambda_{\min}(\hat G)\).
Substituting \(\hat G=\bs\beta\cdot \hat{\bs h}\) gives
\begin{align}
\gamma(\hat{\bs h}\mid \nabla q)^{-1}
=
\min_{\bs\beta,u,v}\quad & v-u
\nonumber\\
\text{s.t.}\quad &
u\hat I \preceq \bs\beta\cdot \hat{\bs h} \preceq v\hat I,
\nonumber\\
& \nabla q\cdot\bs\beta = 1.
\label{eq:primal_sdp_app}
\end{align}

We now derive the dual problem to \eqref{eq:primal_sdp_app}. Introduce dual variables \(\hat X\succeq 0\) and \(\hat Y\succeq 0\) for the matrix inequalities
\(\bs\beta\cdot \hat{\bs h}-v\hat I \preceq 0\) and
\(u\hat I-\bs\beta\cdot \hat{\bs h}\preceq 0\), respectively,
and a scalar \(\xi\in\mathbb{R}\) for the equality constraint \(\nabla q\cdot\bs\beta=1\).
The Lagrangian is
\begin{align}
\mathcal L(\bs\beta,u,v;\xi,\hat X,\hat Y)
&= (v-u)
+ \mathrm{Tr}\!\left[\hat X(\bs\beta\cdot \hat{\bs h}-v\hat I)\right]
+ \mathrm{Tr}\!\left[\hat Y(u\hat I-\bs\beta\cdot \hat{\bs h})\right]
+ \xi(1-\nabla q\cdot\bs\beta).
\end{align}
Minimizing over \(u\) and \(v\) forces
\begin{equation}
\mathrm{Tr}(\hat X)=1,
\qquad
\mathrm{Tr}(\hat Y)=1.
\label{eq:trace_constraints_app}
\end{equation}
Minimizing over \(\bs\beta\) yields the stationarity conditions
\begin{equation}
\mathrm{Tr}\!\left[(\hat X-\hat Y)\hat h_j\right] - \xi\,\partial_j q = 0,
\qquad j=1,\dots,r,
\end{equation}
where $\partial_j q$ is the partial derivative with respect to $\theta_j$. Thus, the dual problem is 
\begin{align}
\max_{\xi,\hat X,\hat Y}\quad & \xi
\nonumber\\
\text{s.t.}\quad &
\mathrm{Tr}\!\left[(\hat X-\hat Y)\hat h_j\right] = \xi\,\partial_j q,\quad j=1,\dots,r,
\nonumber\\
& \hat X,\hat Y \succeq 0,\qquad \mathrm{Tr}(\hat X)=\mathrm{Tr}(\hat Y)=1.
\label{eq:dual_sdp_app}
\end{align}

Next, define \(\hat Z:=\hat X-\hat Y\). From \(\mathrm{Tr}(\hat X)=\mathrm{Tr}(\hat Y)=1\),
we have \(\mathrm{Tr}(\hat Z)=0\), and by the triangle inequality,
\begin{equation}
\|\hat Z\|_1 \le \|\hat X\|_1 + \|\hat Y\|_1 = \mathrm{Tr}(\hat X)+\mathrm{Tr}(\hat Y)=2,
\label{eq:Z_norm_bound}
\end{equation}
where $\|\hat{Z}\|_1 = \tr(\sqrt{\hat{Z}^\dag\hat{Z}})$ is the trace-norm. Assuming \(\xi\neq 0\), define \(\hat A := \hat Z/\xi\). Then, the constraints in
\eqref{eq:dual_sdp_app} become
\begin{equation}
\mathrm{Tr}(\hat A\hat h_j)=\partial_j q,\quad j=1,\dots,r,
\qquad
\mathrm{Tr}(\hat A)=0,
\label{eq:A_constraints_app}
\end{equation}
and \eqref{eq:Z_norm_bound} implies
\begin{equation}
\frac12\|\hat A\|_1 = \frac12\frac{\|\hat Z\|_1}{|\xi|}\ \le\ \frac{1}{|\xi|}.
\end{equation}
Conversely, given any Hermitian \(\hat A\) obeying \eqref{eq:A_constraints_app}, let
\(\hat A=\hat A_+-\hat A_-\) be its Jordan decomposition, with \(\hat A_\pm\succeq 0\),
\(\hat A_+\hat A_-=0\), and \(\mathrm{Tr}(\hat A)=0\Rightarrow
\mathrm{Tr}(\hat A_+)=\mathrm{Tr}(\hat A_-)=:\tau\).
Then \(\tau=\|\hat A\|_1/2\). Setting
\begin{equation}
\hat X := \hat A_+/\tau,\qquad \hat Y := \hat A_-/\tau,\qquad \xi := 1/\tau
\label{eq:construct_XY_from_A}
\end{equation}
gives \(\hat X,\hat Y\succeq 0\), \(\mathrm{Tr}(\hat X)=\mathrm{Tr}(\hat Y)=1\), and
\(\mathrm{Tr}[(\hat X-\hat Y)\hat h_j]=\xi\,\partial_jq\).
Thus, maximizing \(\xi\) is equivalent to minimizing \(\tau=\|\hat A\|_1/2\), which implies the optimization problem introduced in Eq.~\eqref{eq:bound_1norm} of the main text:
\begin{align}
\gamma(\hat{\bs h}\mid \nabla q)
=
\min_{\hat{A}} \quad & \|\hat{A}\|_1/2, \nonumber\\
    \text{s.t.} \quad &
    \Tr\left(\hat{A}\hat h_j\right) = \partial_jq, \quad j=1,\dots,r, \nonumber\\
    & \Tr\left(\hat{A}\right)=0.
\label{eq:gamma_trace_norm_app}
\end{align}
The reformulation of the problem from \cref{eq:dual_sdp_app} to \cref{eq:gamma_trace_norm_app} allows us to apply the dual technique yet again in order to obtain another formulation for $\gamma(\hat{\bs h}\mid \nabla q)$. This final formulation will be useful for analyzing the effects of errors in \cref{app:error_gamma}. In particular, introducing Lagrange multipliers \(y_j \in \mathbb{R}\) for the equality constraints \(\Tr(\hat A \hat h_j)=\partial_j q\) and \(\mu \in \mathbb{R}\) for the trace constraint, the Lagrangian associated with \eqref{eq:gamma_trace_norm_app} is
\begin{equation}
\mathcal{L}(\hat A; y, \mu)
=
\frac{1}{2}\|\hat A\|_1
-\Tr\!\left[\hat A\!\left(\sum_{j=1}^r y_j \hat h_j + \mu \hat I\right)\right]
+\sum_{j=1}^r y_j \partial_j q.
\end{equation}

The dual function is obtained by minimizing the Lagrangian over \(\hat A\). Consequently, the dual optimization problem associated with
\eqref{eq:gamma_trace_norm_app} is
\begin{align}\label{eq:gamma_op_norm}
\gamma(\hat{\bs h}\mid \nabla q) = \max_{y \in \mathbb{R}^r,\; \mu \in \mathbb{R}}\quad
& \bs y \cdot \nabla q \\
\text{s.t.}\quad
& \left\| \bs y \cdot \hat{\bs h} + \mu \hat I \right\|
\le \frac{1}{2}, \nonumber
\end{align}
where $\| \cdot \|$ denotes the spectral norm (largest singular value).
Thus, we have three different optimization problems for $\gamma(\hat{\bs h}\mid \nabla q)$, which are stated in Eqs.~(\ref{eq:gamma_general_def},\ref{eq:gamma_trace_norm_app}, and \ref{eq:gamma_op_norm}). The minimization in \cref{eq:gamma_general_def} arises from the general quantum Cramér–Rao bound, and we use the optimal $\hat{A}$ in \cref{eq:gamma_trace_norm_app} to construct an optimal protocol to saturate the bound in the main text. Finally, we use the reformulation in \cref{eq:gamma_op_norm} to analyze the error, as detailed in the next section.

\section{Error analysis for the two-step adaptive protocol}
\label{app:general_error}

In this Appendix, we provide details for the error terms appearing in the two-step protocol described in
Sec.~\ref{sec:QCRB_general} (i.e. the protocol for general couplings and arbitrary analytic functions).

\subsection{Stability of \texorpdfstring{$\gamma(\hat{\bs g}\mid\bs\alpha)$}{gamma} under perturbations}
\label{app:error_gamma}

First, we quantify how the SDP solution value
\(\gamma(\hat{\bs g}\mid\bs\alpha)\) changes under small perturbations of the generators
\(\hat g_j\) and the coefficient vector \(\bs\alpha\). This result is used in Appendix~\ref{app:protocol_error} to bound the error of the two-step protocol. 
Throughout we assume \(\{\hat I,\hat g_1,\dots,\hat g_m\}\) is linearly independent over \(\mathbb{R}\),
so that the conditioning constant \(\kappa(\hat{\bs g})\) defined below is strictly positive.

\subsubsection{Setup and dual representation}

As derived in the previous section (see \cref{eq:gamma_op_norm}) the solution of the SDP of interest can be formulated as
\begin{align}\label{eq:gamma_dual_app}
\gamma(\hat{\bs g}\mid \bs \alpha) = \max_{y \in \mathbb{R}^m,\; \mu \in \mathbb{R}}\quad
& \bs y \cdot \bs \alpha \\
\text{s.t.}\quad
& \left\| \bs y \cdot \hat{\bs g} + \mu \hat I \right\|
\le \frac{1}{2}. \nonumber
 \end{align}
Consider the perturbed problem where \(\hat{\tilde{g}}_j=\hat g_j+\Delta\hat g_j\) and \(\tilde{\bs\alpha}=\bs\alpha+\Delta\bs\alpha\),for perturbations $\Delta\hat g_j$ and  $\Delta\bs\alpha$ and define the perturbed solution as 
\begin{equation}
\tilde\gamma := \gamma(\hat{\tilde{\bs g}}\mid \tilde{\bs\alpha}).
\end{equation}
We quantify the size of the perturbations by
\begin{equation}
\varepsilon_g := \max_{j}\|\Delta\hat g_j\|,
\qquad
\varepsilon_\alpha := \|\Delta\bs\alpha\|_\infty,
\label{eq:eps_defs_app}
\end{equation}
where \(\|\cdot\|\) denotes the spectral norm (maximum singular value) for operators, and for vectors we use the \(\ell_p\) norm
\begin{equation}
\|\mathbf{x}\|_p :=
\begin{cases}
\left(\sum_{i} |x_i|^p\right)^{1/p}, & 1 \le p < \infty, \\[6pt]
\max_{ i} |x_i|, & p = \infty.
\end{cases}
\end{equation}

\subsubsection{Conditioning constant and bounded dual multipliers}

Define the conditioning constant
\begin{equation}
\kappa(\hat{\bs g})
:=
\min_{\|\bs y\|_1=1}\ \min_{\mu\in\mathbb{R}}
\left\|\sum_{j=1}^m y_j \hat g_j + \mu \hat I \right\|.
\label{eq:kappa_def_app2}
\end{equation}
Via the assumption of linear independence of \(\{\hat I,\hat g_1,\dots,\hat g_m\}\),
we have \(\kappa(\hat{\bs g})>0\). We begin with a lemma that will be useful later regarding the stability of \(\kappa\) under perturbations.

\begin{lemma}[Perturbation of \(\kappa\)]
\label{lem:kappa_perturb_app}
Let \(\tilde g_j=\hat g_j+\Delta\hat g_j\) and \(\varepsilon_g\) as in \eqref{eq:eps_defs_app}. Then
\begin{equation}
\kappa(\hat{\tilde{\bs g}})\ \ge\ \kappa(\hat{\bs g}) - \varepsilon_g.
\label{eq:kappa_lipschitz_app}
\end{equation}
In particular, if \(\varepsilon_g<\kappa(\hat{\bs g})\), then \(\kappa(\hat{\tilde{\bs g}})>0\).
\end{lemma}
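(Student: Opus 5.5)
The inequality is a Lipschitz-type estimate, and I would obtain it directly from the definition \eqref{eq:kappa_def_app2} with the triangle inequality for the spectral norm. The plan is to fix an arbitrary feasible pair for the perturbed problem, compare it with the same pair in the unperturbed problem, and then minimize.

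Fix any $\bs y\in\mathbb{R}^m$ with $\|\bs y\|_1=1$ and any $\mu\in\mathbb{R}$. Writing $\sum_j y_j \hat{\tilde g}_j + \mu\hat I = \bigl(\sum_j y_j\hat g_j+\mu\hat I\bigr) + \sum_j y_j\Delta\hat g_j$, the reverse triangle inequality gives
\begin{equation*}
\Bigl\|\sum_j y_j \hat{\tilde g}_j + \mu\hat I\Bigr\| \ \ge\ \Bigl\|\sum_j y_j\hat g_j+\mu\hat I\Bigr\| - \Bigl\|\sum_j y_j\Delta\hat g_j\Bigr\|.
\end{equation*}
The last term is bounded by $\sum_j|y_j|\,\|\Delta\hat g_j\|\le \|\bs y\|_1\,\varepsilon_g=\varepsilon_g$. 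This is the one place where the choice of the $\ell_1$ normalization in the definition of $\kappa$ matters: it pairs exactly with the $\max_j$ in $\varepsilon_g$ from \eqref{eq:eps_defs_app}, so no dimension-dependent factor appears. The first term on the right is at least $\kappa(\hat{\bs g})$ by definition, since $(\bs y,\mu)$ is feasible for the unperturbed minimization. Hence every value in the perturbed minimization is at least $\kappa(\hat{\bs g})-\varepsilon_g$.

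Taking the minimum over $\mu$ and then over $\bs y$ on the $\ell_1$ sphere yields $\kappa(\hat{\tilde{\bs g}})\ge\kappa(\hat{\bs g})-\varepsilon_g$. The minima should exist, since the $\ell_1$ sphere is compact and the objective tends to infinity as $|\mu|\to\infty$. Even if one prefers to avoid this point, the same argument goes through with infima, because the lower bound holds pointwise. The ``in particular'' clause is immediate: if $\varepsilon_g<\kappa(\hat{\bs g})$, the right-hand side is strictly positive.

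I do not expect a genuine obstacle here. The only step needing care is the norm bound $\|\sum_j y_j\Delta\hat g_j\|\le\varepsilon_g$, which relies on the $\ell_1$ normalization as explained above.
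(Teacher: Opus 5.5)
Your proposal is correct and matches the paper's proof essentially step for step. The paper likewise fixes $\bs y$ with $\|\bs y\|_1=1$ and any $\mu$, applies the reverse triangle inequality, bounds $\|\sum_j y_j\Delta\hat g_j\|\le\sum_j|y_j|\,\|\Delta\hat g_j\|\le\varepsilon_g$, and then minimizes over $\mu$ and then over $\bs y$. Your remark that the pointwise bound also works with infima is a small extra that the paper leaves implicit.
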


\begin{proof}
Fix any \(\bs y\) with \(\|\bs y\|_1=1\). For any \(\mu\),
\begin{align}
\left\|\sum_j y_j \tilde g_j + \mu \hat I\right\|
&=
\left\|\sum_j y_j \hat g_j + \mu \hat I + \sum_j y_j \Delta \hat g_j\right\|
\nonumber\\
&\ge
\left\|\sum_j y_j \hat g_j + \mu \hat I\right\|
-
\left\|\sum_j y_j \Delta \hat g_j\right\|
\nonumber\\
&\ge
\left\|\sum_j y_j \hat g_j + \mu \hat I\right\|
-
\sum_j |y_j|\,\|\Delta\hat g_j\|
\nonumber\\
&\ge
\left\|\sum_j y_j \hat g_j + \mu \hat I\right\|
-\varepsilon_g.
\end{align}
Taking \(\min_\mu\) on both sides and then \(\min_{\|\bs y\|_1=1}\) yields \eqref{eq:kappa_lipschitz_app}.
\end{proof}

The next lemma bounds the size of $\norm{\vec{y}}$ for $\vec{y}$ satisfying the constraint  in \eqref{eq:gamma_dual_app}.

\begin{lemma}[Bounded dual multipliers]
\label{lem:y_bound_app}
Let \(\kappa(\hat{\bs g})\) be defined in \eqref{eq:kappa_def_app2}. Then any \((\bs y,\mu)\) satisfying
\(\left\|\sum_j y_j \hat g_j + \mu \hat I\right\| \le \frac12\) obeys
\begin{equation}
\|\bs y\|_1 \le \frac{1}{2\,\kappa(\hat{\bs g})}.
\label{eq:y_bound_app}
\end{equation}
Similarly, any \((\bs y,\mu)\) feasible for the perturbed constraint
\(\left\|\sum_j y_j \tilde g_j + \mu \hat I\right\| \le \frac12\) obeys
\begin{equation}
\|\bs y\|_1 \le \frac{1}{2\,\kappa(\hat{\tilde{\bs g}})}
\ \le\ \frac{1}{2\,(\kappa(\hat{\bs g})-\varepsilon_g)},
\label{eq:y_bound_tilde_app}
\end{equation}
provided \(\varepsilon_g<\kappa(\hat{\bs g})\).
\end{lemma}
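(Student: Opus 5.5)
The plan is a direct homogeneity argument using the definition of $\kappa$ in \eqref{eq:kappa_def_app2}, followed by an application of \cref{lem:kappa_perturb_app}.

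Take any feasible pair $(\bs y,\mu)$ with $\left\|\sum_j y_j \hat g_j + \mu \hat I\right\| \le \frac12$. If $\bs y=0$, the bound holds trivially, so assume $\bs y\neq 0$. Set $\bs y' := \bs y/\|\bs y\|_1$ and $\mu' := \mu/\|\bs y\|_1$. Then $\|\bs y'\|_1=1$, and the definition of $\kappa(\hat{\bs g})$ as a minimum over unit-$\ell_1$ vectors and all real $\mu$ gives
\begin{equation}
\kappa(\hat{\bs g}) \le \left\|\sum_j y'_j \hat g_j + \mu' \hat I\right\| = \frac{1}{\|\bs y\|_1}\left\|\sum_j y_j \hat g_j + \mu \hat I\right\| \le \frac{1}{2\|\bs y\|_1}.
\end{equation}
The equality uses absolute homogeneity of the spectral norm. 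Since $\kappa(\hat{\bs g})>0$ by the linear independence assumption, we can rearrange to obtain \eqref{eq:y_bound_app}.

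For the perturbed statement, I would repeat exactly the same argument with $\tilde g_j$ in place of $\hat g_j$. This requires $\kappa(\hat{\tilde{\bs g}})>0$, which \cref{lem:kappa_perturb_app} provides whenever $\varepsilon_g<\kappa(\hat{\bs g})$. It yields $\|\bs y\|_1\le 1/(2\kappa(\hat{\tilde{\bs g}}))$.

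The second inequality in \eqref{eq:y_bound_tilde_app} then follows from the lower bound $\kappa(\hat{\tilde{\bs g}})\ge \kappa(\hat{\bs g})-\varepsilon_g>0$ of \cref{lem:kappa_perturb_app}, because $x\mapsto 1/(2x)$ is decreasing on positive reals.

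No step is a real obstacle. The only care needed is to note that the minimum in the definition of $\kappa$ is attained, or at least that the infimum is a valid lower bound; either suffices for the argument. One must also track positivity of $\kappa(\hat{\tilde{\bs g}})$ before dividing by it.
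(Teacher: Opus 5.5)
Your proposal is correct and follows essentially the same route as the paper. Both normalize $\bs y$ to unit $\ell_1$ norm, use absolute homogeneity of the spectral norm against the definition of $\kappa(\hat{\bs g})$, and then invoke Lemma~\ref{lem:kappa_perturb_app} for positivity of $\kappa(\hat{\tilde{\bs g}})$ and for the second inequality. The only cosmetic difference is that you plug in the specific rescaled $\mu'=\mu/\|\bs y\|_1$, whereas the paper rescales the inner minimum over $\mu$; your version is slightly more direct.
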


\begin{proof}
If \(\bs y=\bs 0\), the claim is trivial. Otherwise write \(\bs y=\|\bs y\|_1\,{\bs y'}\)
with \(\|{\bs y'}\|_1=1\). By definition of \(\kappa(\hat{\bs g})\),
\begin{equation}
\min_{\nu\in\mathbb{R}}
\left\|\sum_j  y'_j \hat g_j + \nu \hat I\right\| \ge \kappa(\hat{\bs g}).
\end{equation}
Scaling \(\nu=\mu/\|\bs y\|_1\) gives
\begin{equation}
\min_{\mu\in\mathbb{R}}
\left\|\sum_j y_j \hat g_j + \mu \hat I\right\|
=
\|\bs y\|_1
\min_{\nu\in\mathbb{R}}
\left\|\sum_j  y'_j \hat g_j + \nu \hat I\right\|
\ge
\kappa(\hat{\bs g})\,\|\bs y\|_1.
\end{equation}
Since \(\left\|\sum_j y_j \hat g_j + \mu \hat I\right\|\) is at least its minimum over \(\mu\),
feasibility implies \(\frac12\ge \kappa(\hat{\bs g})\|\bs y\|_1\), proving \eqref{eq:y_bound_app}.
The perturbed bound \eqref{eq:y_bound_tilde_app} follows identically with \(\hat{\tilde{\bs g}}\),
together with Lemma~\ref{lem:kappa_perturb_app}.
\end{proof}

\subsubsection{A scaling relation between the dual feasible sets}

Let
\begin{equation}
\mathcal D(\hat{\bs g})
:=
\left\{(\bs y,\mu):\left\|\sum_{j=1}^m y_j \hat g_j + \mu \hat I\right\| \le \frac12\right\}
\end{equation}
be the feasible set of the dual \eqref{eq:gamma_dual_app}, and define \(\mathcal D(\hat{\tilde{\bs g}})\) similarly.

\begin{lemma}[Feasibility transfer by scaling]
\label{lem:scaling_app}
Fix \((\bs y,\mu)\in\mathcal D(\hat{\bs g})\) and define
\begin{equation}
\delta(\bs y):=\left\|\sum_{j=1}^m y_j \Delta\hat g_j\right\|
\le \varepsilon_g\,\|\bs y\|_1.
\label{eq:delta_def_app}
\end{equation}
Then the scaled pair
\begin{equation}
(\bs y',\mu'):=\frac{1}{1+2\delta(\bs y)}(\bs y,\mu)
\label{eq:scaling_map_app}
\end{equation}
is feasible for the perturbed dual, i.e. \((\bs y',\mu')\in\mathcal D(\hat{\tilde{\bs g}})\).
\end{lemma}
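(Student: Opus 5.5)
The argument is a direct triangle-inequality estimate. Write $c := 1/(1+2\delta(\bs y))\in(0,1]$, so that $(\bs y',\mu') = c(\bs y,\mu)$. We must show
\[
\Bigl\|\sum_j y'_j \tilde g_j + \mu'\hat I\Bigr\| \le \tfrac12 .
\]

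\textbf{Step 1: split into unperturbed and perturbed parts.} Expanding $\tilde g_j=\hat g_j+\Delta\hat g_j$ and pulling out the scalar $c$ gives
\[
\Bigl\|\sum_j y'_j \tilde g_j + \mu'\hat I\Bigr\| = c\,\Bigl\|\sum_j y_j\hat g_j+\mu\hat I + \sum_j y_j\Delta\hat g_j\Bigr\| .
\]

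\textbf{Step 2: bound each part.} Applying the triangle inequality to the right-hand side, we get
\[
c\,\Bigl\|\sum_j y_j\hat g_j+\mu\hat I + \sum_j y_j\Delta\hat g_j\Bigr\| \le c\bigl(\tfrac12+\delta(\bs y)\bigr).
\]
Here the first term is at most $\tfrac12$ because $(\bs y,\mu)\in\mathcal D(\hat{\bs g})$, and the second term equals $\delta(\bs y)$ by definition.

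\textbf{Step 3: substitute $c$.} We have
\[
c\bigl(\tfrac12+\delta(\bs y)\bigr) = \frac{\tfrac12+\delta(\bs y)}{1+2\delta(\bs y)} = \tfrac12 ,
\]
so the scaled pair lies in $\mathcal D(\hat{\tilde{\bs g}})$.

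\textbf{Side inequality.} The bound $\delta(\bs y)\le\varepsilon_g\|\bs y\|_1$ in \eqref{eq:delta_def_app} follows from the triangle inequality and the homogeneity of the spectral norm:
\[
\Bigl\|\sum_j y_j\Delta\hat g_j\Bigr\|\le \sum_j |y_j|\,\|\Delta\hat g_j\| \le \varepsilon_g\|\bs y\|_1 ,
\]
exactly as in the proof of Lemma~\ref{lem:kappa_perturb_app}.

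There is no real obstacle here. The only point requiring care is the choice of scaling factor: we scale $\mu$ together with $\bs y$ so that the identity term scales uniformly and the factor $c$ can be pulled out of the whole norm.
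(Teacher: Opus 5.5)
Your proof is correct and matches the paper's argument essentially step for step: you factor out $1/(1+2\delta(\bs y))$, split off $\sum_j y_j\Delta\hat g_j$ with the triangle inequality, and use $\tfrac12+\delta(\bs y)=\tfrac12\bigl(1+2\delta(\bs y)\bigr)$. Your explicit check that $\delta(\bs y)\le\varepsilon_g\|\bs y\|_1$ is a small addition that the paper leaves implicit in the statement.
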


\begin{proof}
Using \(\hat{\tilde g}_j=\hat g_j+\Delta\hat g_j\), we have
\begin{align}
\left\|\sum_j y'_j \hat{\tilde g}_j + \mu' \hat I\right\|
&=
\frac{1}{1+2\delta(\bs y)}
\left\|
\sum_j y_j \hat g_j + \mu \hat I + \sum_j y_j \Delta\hat g_j
\right\|
\nonumber\\
&\le
\frac{1}{1+2\delta(\bs y)}
\left(
\left\|\sum_j y_j \hat g_j + \mu \hat I\right\|
+
\left\|\sum_j y_j \Delta\hat g_j\right\|
\right)
\nonumber\\
&\le
\frac{1}{1+2\delta(\bs y)}
\left(\frac12+\delta(\bs y)\right)
=\frac12,
\end{align}
where we used \((\bs y,\mu)\in\mathcal D(\hat{\bs g})\) and the definition of \(\delta(\bs y)\).
\end{proof}

\subsubsection{Main perturbation bound for \texorpdfstring{$\gamma$}{gamma}}
\label{app:error_gamma:main}

The lemmas from the previous subsection in hand, we now bound the error  \(|\tilde\gamma-\gamma|\)  in the SDP solution due to perturbation in terms of \(\varepsilon_g\) and \(\varepsilon_\alpha\).

\begin{theorem}[Error scaling of \(\gamma\)]
\label{thm:gamma_perturb_app}
Let \(\kappa:=\kappa(\hat{\bs g})>0\), and assume \(\varepsilon_g<\kappa\).
Let \(\gamma=\gamma(\hat{\bs g}\mid\bs\alpha)\) and \(\tilde\gamma=\gamma(\hat{\tilde{\bs g}}\mid\tilde{\bs\alpha})\),
with \(\varepsilon_g,\varepsilon_\alpha\) as in \eqref{eq:eps_defs_app}.
Then
\begin{equation}
\frac{\gamma - \frac{\varepsilon_\alpha}{2\kappa}}{1+\frac{\varepsilon_g}{\kappa}}
\ \le\
\tilde\gamma
\ \le\
\left(1+\frac{\varepsilon_g}{\kappa-\varepsilon_g}\right)\gamma
+\frac{\varepsilon_\alpha}{2(\kappa-\varepsilon_g)}.
\label{eq:gamma_two_sided_app}
\end{equation}
Consequently,
\begin{equation}
|\tilde\gamma-\gamma|
\ \le\
\frac{\varepsilon_g}{\kappa-\varepsilon_g}\,\gamma
+\frac{\varepsilon_\alpha}{2(\kappa-\varepsilon_g)}.
\label{eq:gamma_abs_bound_app}
\end{equation}
In particular, if \(\varepsilon_g=O(\varepsilon)\) and \(\varepsilon_\alpha=O(\varepsilon)\), then
\(|\tilde\gamma-\gamma|=O(\varepsilon)\).
\end{theorem}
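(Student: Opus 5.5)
The plan is to work entirely with the dual representation \eqref{eq:gamma_dual_app}, where $\gamma$ is a maximum of the linear functional $\bs y\cdot\bs\alpha$ over the convex set $\mathcal D(\hat{\bs g})$. Each one-sided bound in \eqref{eq:gamma_two_sided_app} then follows from one recipe. Take an optimizer of one problem, use Lemma~\ref{lem:scaling_app} to rescale it into a feasible point of the other problem, and control the loss with Lemma~\ref{lem:y_bound_app}. Two facts will be used throughout. First, $\gamma\ge 0$ and $\tilde\gamma\ge 0$, since $(\bs y,\mu)=(\bs 0,0)$ is feasible for both problems. Second, optimizers exist, because by Lemma~\ref{lem:y_bound_app} the $\bs y$-projection of each feasible set is bounded. (The $\mu$-coordinate is also bounded, since $|\mu|$ is controlled by $\|\bs y\cdot\hat{\bs g}\|+\tfrac12$.)

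\emph{Lower bound.} Let $(\bs y,\mu)\in\mathcal D(\hat{\bs g})$ attain $\gamma$. By Lemma~\ref{lem:y_bound_app}, $\|\bs y\|_1\le 1/(2\kappa)$, so $\delta(\bs y)\le \varepsilon_g/(2\kappa)$ and hence $1+2\delta(\bs y)\le 1+\varepsilon_g/\kappa$. Lemma~\ref{lem:scaling_app} makes $(\bs y,\mu)/(1+2\delta(\bs y))$ feasible for the perturbed dual. Using Hölder, $|\bs y\cdot\Delta\bs\alpha|\le \|\bs y\|_1\varepsilon_\alpha$, this gives
\[
\tilde\gamma\ \ge\ \frac{\bs y\cdot\bs\alpha+\bs y\cdot\Delta\bs\alpha}{1+2\delta(\bs y)}\ \ge\ \frac{\gamma-\varepsilon_\alpha/(2\kappa)}{1+2\delta(\bs y)} .
\]
If the numerator is nonnegative, we may replace the denominator by the larger quantity $1+\varepsilon_g/\kappa$. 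If the numerator is negative, the claimed lower bound is negative, and it holds trivially because $\tilde\gamma\ge0$.

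\emph{Upper bound.} This is the same argument with the roles of the two problems swapped. Write $\hat g_j=\hat{\tilde g}_j-\Delta\hat g_j$ and $\bs\alpha=\tilde{\bs\alpha}-\Delta\bs\alpha$. Lemma~\ref{lem:scaling_app} is symmetric under $\Delta\hat g_j\to-\Delta\hat g_j$, since only $\|\sum_j y_j\Delta\hat g_j\|$ enters. Let $(\tilde{\bs y},\tilde\mu)$ attain $\tilde\gamma$. The perturbed half of Lemma~\ref{lem:y_bound_app}, which uses $\varepsilon_g<\kappa$ and Lemma~\ref{lem:kappa_perturb_app}, gives $\|\tilde{\bs y}\|_1\le 1/(2(\kappa-\varepsilon_g))$. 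Hence $\tilde\delta\le \varepsilon_g/(2(\kappa-\varepsilon_g))$, and $1+2\tilde\delta\le 1+\varepsilon_g/(\kappa-\varepsilon_g)$. Rescaling $(\tilde{\bs y},\tilde\mu)$ into $\mathcal D(\hat{\bs g})$ yields
\[
\gamma\ \ge\ \frac{\tilde\gamma-\varepsilon_\alpha/(2(\kappa-\varepsilon_g))}{1+2\tilde\delta} .
\]
Rearranging gives the right inequality of \eqref{eq:gamma_two_sided_app}; no sign issue arises here.

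\emph{Absolute bound.} The upper bound directly gives
\[
\tilde\gamma-\gamma\le \frac{\varepsilon_g}{\kappa-\varepsilon_g}\,\gamma+\frac{\varepsilon_\alpha}{2(\kappa-\varepsilon_g)} .
\]
For the other direction, the lower bound gives
\[
\gamma-\tilde\gamma\le \frac{\varepsilon_g}{\kappa+\varepsilon_g}\,\gamma+\frac{\varepsilon_\alpha}{2(\kappa+\varepsilon_g)} ,
\]
which is dominated by the same expression because $\kappa+\varepsilon_g\ge\kappa-\varepsilon_g$. (When the clamp to $\tilde\gamma\ge0$ is active, one bounds $\gamma-\tilde\gamma\le\gamma$ and reaches the same dominated expression.) This yields \eqref{eq:gamma_abs_bound_app}, and the final $O(\varepsilon)$ claim follows once $\kappa$ and $\gamma$ are treated as fixed.

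I expect the delicate step to be the reverse-direction bookkeeping. There one must make sure that the multiplier bound used is the perturbed one, $1/(2(\kappa-\varepsilon_g))$, rather than $1/(2\kappa)$. That is exactly what produces the asymmetric denominators in \eqref{eq:gamma_two_sided_app}, and it is the only place where the hypothesis $\varepsilon_g<\kappa$ is used.
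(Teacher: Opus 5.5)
Your proposal is correct and matches the paper's proof essentially step for step. In each direction you take a dual optimizer, bound its $\ell_1$ norm via Lemma~\ref{lem:y_bound_app} (with the perturbed constant $\kappa-\varepsilon_g$ in the reverse direction), rescale it into the other feasible set via Lemma~\ref{lem:scaling_app}, and rearrange. Your additional bookkeeping fills in small points the paper passes over: you use $\tilde\gamma\ge 0$ when the lower-bound numerator is negative, you justify that the optimizers are attained, and you derive the $\gamma-\tilde\gamma$ direction of the absolute bound explicitly with denominators $\kappa+\varepsilon_g$.
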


\begin{proof}
\emph{Lower bound.}
Let \((\bs y^\star,\mu^\star)\) be an optimizer of \eqref{eq:gamma_dual_app} for \(\gamma\), so that
\(\gamma=\bs\alpha\cdot \bs y^\star\) and \((\bs y^\star,\mu^\star)\in\mathcal D(\hat{\bs g})\).
By Lemma~\ref{lem:y_bound_app}, \(\|\bs y^\star\|_1\le 1/(2\kappa)\), hence
\(\delta(\bs y^\star)\le \varepsilon_g/(2\kappa)\).
By Lemma~\ref{lem:scaling_app}, the scaled point
\((\bs y',\mu')=(\bs y^\star,\mu^\star)/(1+2\delta(\bs y^\star))\) lies in \(\mathcal D(\hat{\tilde{\bs g}})\), so
\begin{align}
\tilde\gamma
&=
\max_{(\bs y,\mu)\in\mathcal D(\hat{\tilde{\bs g}})} (\bs\alpha+\Delta\bs\alpha)\cdot \bs y
\ \ge\
(\bs\alpha+\Delta\bs\alpha)\cdot \bs y'
\nonumber\\
&=
\frac{(\bs\alpha+\Delta\bs\alpha)\cdot \bs y^\star}{1+2\delta(\bs y^\star)}
=
\frac{\gamma + \Delta\bs\alpha\cdot \bs y^\star}{1+2\delta(\bs y^\star)}.
\end{align}
Using \(|\Delta\bs\alpha\cdot \bs y^\star|\le \|\Delta\bs\alpha\|_\infty \|\bs y^\star\|_1
\le \varepsilon_\alpha/(2\kappa)\) and \(\delta(\bs y^\star)\le \varepsilon_g/(2\kappa)\) yields
\[
\tilde\gamma \ \ge\ \frac{\gamma-\varepsilon_\alpha/(2\kappa)}{1+\varepsilon_g/\kappa},
\]
which is the left inequality in \eqref{eq:gamma_two_sided_app}.

\emph{Upper bound.}
Let \((\tilde{\bs y}^\star,\tilde\mu^\star)\) be an optimizer for \(\tilde\gamma\), so that
\(\tilde\gamma=(\bs\alpha+\Delta\bs\alpha)\cdot \tilde{\bs y}^\star\) and
\((\tilde{\bs y}^\star,\tilde\mu^\star)\in\mathcal D(\hat{\tilde{\bs g}})\).
By Lemma~\ref{lem:y_bound_app} and Lemma~\ref{lem:kappa_perturb_app},
\(\|\tilde{\bs y}^\star\|_1 \le 1/(2(\kappa-\varepsilon_g))\).
Now consider its feasibility for the \emph{unperturbed} constraint:
\begin{align}
\left\|\sum_j \tilde y^\star_j \hat g_j + \tilde\mu^\star \hat I\right\|
&\le
\left\|\sum_j \tilde y^\star_j \hat{\tilde g}_j + \tilde\mu^\star \hat I\right\|
+
\left\|\sum_j \tilde y^\star_j \Delta\hat g_j\right\|
\nonumber\\
&\le
\frac12 + \varepsilon_g\,\|\tilde{\bs y}^\star\|_1
\le
\frac12 + \frac{\varepsilon_g}{2(\kappa-\varepsilon_g)}.
\end{align}
Let \(\tilde\delta:=\left\|\sum_j \tilde y^\star_j \Delta\hat g_j\right\|
\le \varepsilon_g\,\|\tilde{\bs y}^\star\|_1\).
Scaling down by \(1+2\tilde\delta\) yields a feasible point for \(\mathcal D(\hat{\bs g})\), hence
\begin{align}
\gamma
&=
\max_{(\bs y,\mu)\in\mathcal D(\hat{\bs g})} \bs\alpha\cdot \bs y
\ \ge\
\bs\alpha\cdot \frac{\tilde{\bs y}^\star}{1+2\tilde\delta}
=
\frac{\bs\alpha\cdot \tilde{\bs y}^\star}{1+2\tilde\delta}
\nonumber\\
&=
\frac{\tilde\gamma - \Delta\bs\alpha\cdot \tilde{\bs y}^\star}{1+2\tilde\delta}.
\end{align}
Rearranging gives
\begin{equation}
\tilde\gamma
\le
(1+2\tilde\delta)\gamma + \Delta\bs\alpha\cdot \tilde{\bs y}^\star
\le
\left(1+\frac{\varepsilon_g}{\kappa-\varepsilon_g}\right)\gamma
+
\frac{\varepsilon_\alpha}{2(\kappa-\varepsilon_g)},
\end{equation}
where we used \(\tilde\delta\le \varepsilon_g\|\tilde{\bs y}^\star\|_1
\le \varepsilon_g/(2(\kappa-\varepsilon_g))\) and
\(|\Delta\bs\alpha\cdot \tilde{\bs y}^\star|\le \varepsilon_\alpha\|\tilde{\bs y}^\star\|_1
\le \varepsilon_\alpha/(2(\kappa-\varepsilon_g))\).
This proves the right inequality in \eqref{eq:gamma_two_sided_app}.

Finally, \eqref{eq:gamma_abs_bound_app} follows by combining the two-sided bounds and using that
\(\kappa-\varepsilon_g\le \kappa\) and the denominators are positive when \(\varepsilon_g<\kappa\).
\end{proof}

\noindent \paragraph*{\textbf{Remark} (simplified bound for sufficiently small perturbations).}
If \(\varepsilon_g \le \kappa/2\), then \(\kappa-\varepsilon_g\ge \kappa/2\) and \eqref{eq:gamma_abs_bound_app} implies
\begin{equation}
|\tilde\gamma-\gamma|
\le
\frac{2\gamma}{\kappa}\,\varepsilon_g + \frac{1}{\kappa}\,\varepsilon_\alpha,
\end{equation}
which makes the linear \(O(\varepsilon)\) scaling explicit.

\subsection{Error bounds for the two-step protocol}
\label{app:protocol_error}

Next, we bound the errors in the two-step protocol of Sec.\ref{sec:QCRB_general}, and show that it saturates the optimal bound in Eqs.~\eqref{eq:QCRB_general}, and \eqref{eq:QCRB_general_app}.

\subsubsection{Assumptions and notation}
Let \(\bs\theta\in\mathbb{R}^r\) be the true parameter and let \(\tilde{\bs\theta}\) be the stage-1 estimate.
Define the estimation error
\begin{equation}
\delta\bs\theta := \bs\theta-\tilde{\bs\theta},
\qquad
\|\delta\bs\theta\|_\infty = O(1/t_1).
\label{eq:delta_theta_def_app}
\end{equation}
Assume \(f_j(\bs\theta)\) and \(q(\bs\theta)\) are $C^2$ in a neighborhood
\(\mathcal U\) containing the segment between \(\bs\theta\) and \(\tilde{\bs\theta}\), and define uniform
Hessian bounds
\begin{align}
L_f &:= \max_{j}\ \max_{a,b\in\{1,\dots,r\}}\ \sup_{\bs\vartheta\in\mathcal U}
\left|\frac{\partial^2 f_j(\bs\vartheta)}{\partial \theta_a\,\partial \theta_b}\right|,
\label{eq:Lf_def_app}\\
L_q &:= \max_{a,b\in\{1,\dots,r\}}\ \sup_{\bs\vartheta\in\mathcal U}
\left|\frac{\partial^2 q(\bs\vartheta)}{\partial \theta_a\,\partial \theta_b}\right|.
\label{eq:Lq_def_app}
\end{align}
Also let
\begin{equation}
G_{\max} := \max_{j=1,\dots,m}\ \|\hat g_j\|.
\label{eq:Gmax_def_app}
\end{equation}

Recall the Jacobian \(F(\bs\theta)\in\mathbb{R}^{r\times m}\) with entries
\(F_{ij}(\bs\theta)=\partial f_j(\bs\theta)/\partial\theta_i\), and the effective generators
\begin{equation}
\hat h_i(\bs\theta) := \sum_{j=1}^m F_{ij}(\bs\theta)\,\hat g_j,
\qquad
i=1,\dots,r,
\label{eq:h_def_app}
\end{equation}
so that \(\hat{\bs h}(\bs\theta)=F(\bs\theta)\hat{\bs g}\).
Similarly, define \(\tilde F:=F(\tilde{\bs\theta})\), \(\nabla\tilde q := \nabla q(\tilde{\bs\theta})\), and
\(\hat{\tilde{\bs h}}:=\hat{\bs h}(\tilde{\bs\theta})=\tilde F\hat{\bs g}\).

Finally, assume the relevant \(\gamma\) values are finite and the conditioning constant
\(\kappa(\hat{\bs h}(\bs\theta))>0\) (\cref{eq:kappa_def_app2}), so that the stability bound for \(\gamma\)
from Appendix~\ref{app:error_gamma} applies.

\subsubsection{Taylor linearization errors}
We first quantify the remainders in Eqs.~\eqref{eq:linearized_H}--\eqref{eq:linearized_q}.

\paragraph*{Couplings.}
By multivariate Taylor's theorem, for each \(j\) there exists \(\bs\vartheta_j\) on the line segment between
\(\bs\theta\) and \(\tilde{\bs\theta}\) such that
\begin{equation}
f_j(\bs\theta)
=
f_j(\tilde{\bs\theta})
+\sum_{i=1}^r \tilde F_{ij}\,\delta\theta_i
+\frac12\,\delta\bs\theta^{\mathsf T}\,H_j(\bs\vartheta_j)\,\delta\bs\theta,
\label{eq:taylor_f_app}
\end{equation}
where \(H_j\) is the Hessian matrix of \(f_j\).
Using \eqref{eq:Lf_def_app} and \(\|\delta\bs\theta\|_1\le r\|\delta\bs\theta\|_\infty\), we obtain the bound
\begin{equation}
\left|f_j(\bs\theta)-f_j(\tilde{\bs\theta})-\sum_{i=1}^r \tilde F_{ij}\,\delta\theta_i\right|
\le \frac12\,L_f\,\|\delta\bs\theta\|_1^2
\le \frac12\,L_f\,r^2\,\|\delta\bs\theta\|_\infty^2
=
O\!\left(\frac{1}{t_1^2}\right).
\label{eq:f_remainder_bound_app}
\end{equation}

Subtracting the known term \(\sum_j f_j(\tilde{\bs\theta})\hat g_j\) by control, the unknown Hamiltonian in stage~2 can be written as
\begin{equation}
\hat H_0(\bs\theta)-\hat H_0(\tilde{\bs\theta})
=
\sum_{i=1}^r \delta\theta_i\,(\tilde F\hat{\bs g})_i
+\hat R_H,
\label{eq:H_remainder_decomp_app}
\end{equation}
where the remainder satisfies
\begin{equation}
\|\hat R_H\|
\le
\sum_{j=1}^m
\left|f_j(\bs\theta)-f_j(\tilde{\bs\theta})-\sum_{i} \tilde F_{ij}\,\delta\theta_i\right|\ \|\hat g_j\|
\le
\frac12\,m\,G_{\max}\,L_f\,r^2\,\|\delta\bs\theta\|_\infty^2
=
O\!\left(\frac{1}{t_1^2}\right).
\label{eq:R_H_bound_app}
\end{equation}
Which results in Eq.~\eqref{eq:linearized_H}.

\paragraph*{Function of interest.}
Similarly, Taylor's theorem gives
\begin{equation}
q(\bs\theta)=q(\tilde{\bs\theta})+\nabla\tilde q\cdot\delta\bs\theta + R_q,
\label{eq:taylor_q_app}
\end{equation}
with
\begin{equation}
|R_q|
\le
\frac12\,L_q\,\|\delta\bs\theta\|_1^2
\le
\frac12\,L_q\,r^2\,\|\delta\bs\theta\|_\infty^2
=
O\!\left(\frac{1}{t_1^2}\right),
\label{eq:R_q_bound_app}
\end{equation}
which justifies the \(O(1/t_1^2)\) term in Eq.~\eqref{eq:linearized_q}.

\subsubsection{Stability of {$\gamma(F\hat{\bs g}\mid \nabla q)$}}
We now relate \(\gamma(\tilde F\hat{\bs g}\mid \nabla\tilde q)\) to \(\gamma(F\hat{\bs g}\mid \nabla q)\).

\paragraph*{Perturbations in the effective linear model.}
Recall the (unknown) ideal objects at \(\bs\theta\),
\begin{equation}
\hat{\bs h}:=\hat{\bs h}(\bs\theta)=F(\bs\theta)\hat{\bs g},
\qquad
\bs\alpha:=\nabla q(\bs\theta),
\end{equation}
and their stage-1 approximations
\begin{equation}
\tilde{\hat{\bs h}}:=\hat{\bs h}(\tilde{\bs\theta})=\tilde F\hat{\bs g},
\qquad
\tilde{\bs\alpha}:=\nabla q(\tilde{\bs\theta})=\nabla\tilde q.
\end{equation}
We bound the induced perturbations
\(\Delta\hat h_i := \hat{\tilde h}_i-\hat h_i\) and \(\Delta\bs\alpha:=\tilde{\bs\alpha}-\bs\alpha\).

First, by the mean value theorem applied to \(F_{ij}=\partial f_j/\partial\theta_i\),
for each \(i,j\) there exists \(\bs\vartheta_{ij}\in\mathcal U\) such that
\begin{equation}
|\tilde F_{ij}-F_{ij}|
=
\left|\sum_{a=1}^r
\frac{\partial^2 f_j(\bs\vartheta_{ij})}{\partial\theta_i\,\partial\theta_a}\ \delta\theta_a\right|
\le
\sum_{a=1}^r L_f\,|\delta\theta_a|
\le
r L_f\,\|\delta\bs\theta\|_\infty.
\label{eq:DeltaF_bound_app}
\end{equation}
Hence, using \eqref{eq:Gmax_def_app},
\begin{align}
\|\Delta\hat h_i\|
&=
\left\|\sum_{j=1}^m (\tilde F_{ij}-F_{ij})\,\hat g_j\right\|
\le
\sum_{j=1}^m |\tilde F_{ij}-F_{ij}|\,\|\hat g_j\|
\nonumber\\
&\le
m\,G_{\max}\,r\,L_f\,\|\delta\bs\theta\|_\infty
=
O\!\left(\frac{1}{t_1}\right),
\qquad i=1,\dots,r.
\label{eq:Delta_h_bound_app}
\end{align}
Second, applying the mean value theorem to \(\nabla q\) and using \eqref{eq:Lq_def_app} gives
\begin{equation}
\|\Delta\bs\alpha\|_\infty
=
\|\nabla q(\tilde{\bs\theta})-\nabla q(\bs\theta)\|_\infty
\le
r L_q\,\|\delta\bs\theta\|_\infty
=
O\!\left(\frac{1}{t_1}\right).
\label{eq:Delta_alpha_bound_app}
\end{equation}

\paragraph*{Applying the stability theorem for \(\gamma\).}
Let
\begin{equation}
\gamma:=\gamma(\hat{\bs h}\mid \bs\alpha)=\gamma(F\hat{\bs g}\mid \nabla q),
\qquad
\tilde\gamma:=\gamma(\hat{\tilde{\bs h}}\mid \tilde{\bs\alpha})
=\gamma(\tilde F\hat{\bs g}\mid \nabla\tilde q).
\end{equation}
Define perturbation magnitudes (in the form used in Appendix~\ref{app:error_gamma})
\begin{equation}
\varepsilon_g := \max_{i=1,\dots,r}\ \|\Delta\hat h_i\|,
\qquad
\varepsilon_\alpha := \|\Delta\bs\alpha\|_\infty.
\label{eq:eps_ga_app}
\end{equation}
By \eqref{eq:Delta_h_bound_app}--\eqref{eq:Delta_alpha_bound_app},
\(\varepsilon_g=O(1/t_1)\) and \(\varepsilon_\alpha=O(1/t_1)\).

Let \(\kappa:=\kappa(\hat{\bs h})>0\) be the conditioning constant for the effective generators \(\hat{\bs h}\). For sufficiently large \(t_1\), we have \(2\varepsilon_g<\kappa\).
Then Theorem~\ref{thm:gamma_perturb_app} in Appendix~\ref{app:error_gamma} gives the bound
\begin{equation}
|\tilde\gamma-\gamma|
\ \le\
\frac{\varepsilon_g}{\kappa-\varepsilon_g}\,\gamma
+\frac{\varepsilon_\alpha}{2(\kappa-\varepsilon_g)}
=
O\!\left(\frac{1}{t_1}\right).
\label{eq:Delta_gamma_bound_app}
\end{equation}

\subsubsection{Conditions on $t_1,t_2$ and overall MSE scaling}
The stage-2 estimator targets the \emph{linearized} quantity \(\nabla\tilde q\cdot\delta\bs\theta\).
The two Taylor remainders \(\hat R_H\) and \(R_q\) induce additional errors of order \(O(1/t_1^2)\)
in the effective linear model and the function of interest.

\paragraph*{Negligibility of the Taylor remainder in \(q\).}
From \eqref{eq:R_q_bound_app}, the linear approximation error in \(q\) satisfies \(|R_q|=O(1/t_1^2)\).
If the stage-2 estimator is locally unbiased for the linear term \(\nabla\tilde q\cdot\delta\bs\theta\),
then \(R_q\) appears as a bias in the overall estimator of \(q\), contributing
\begin{equation}
\mathrm{Bias}^2(q_{\mathrm{est}})\ \le\ |R_q|^2 = O\!\left(\frac{1}{t_1^4}\right).
\end{equation}
To make this negligible compared to the stage-2 variance \(\sim 1/t_2^2\), it suffices that
\begin{equation}
\frac{1}{t_1^4} \ll \frac{1}{t_2^2}
\quad\Longleftrightarrow\quad
\frac{t_2}{t_1^2}\to 0.
\label{eq:condition_t2_t1_sq_app}
\end{equation}

\paragraph*{Negligibility of the Taylor remainder in \(\hat H_0\).}
Model the stage-2 control as known instantaneous pulses \(\{\hat V_\ell\}_{\ell=1}^{M}\) interleaved with free
evolution for durations \(\{\tau_\ell\}_{\ell=1}^{M+1}\), with \(\sum_{\ell}\tau_\ell=t_2\).
Write \(\hat H_{\mathrm{true}}=\hat H_{\mathrm{lin}}+\hat R_H\) where
\(\|\hat R_H\|=O(1/t_1^2)\).
Let \(\hat U_{\mathrm{lin}}\) and \(\hat U_{\mathrm{true}}\) be the resulting implemented unitaries:
\begin{equation}
\hat U_{\mathrm{lin}}=\hat V_M e^{-i\hat H_{\mathrm{lin}}\tau_{M+1}}\cdots \hat V_1 e^{-i\hat H_{\mathrm{lin}}\tau_{1}},
\qquad
\hat U_{\mathrm{true}}=\hat V_M e^{-i\hat H_{\mathrm{true}}\tau_{M+1}}\cdots \hat V_1 e^{-i\hat H_{\mathrm{true}}\tau_{1}}.
\end{equation}
For each segment, the Duhamel formula gives
\begin{equation}
    \|e^{-i\hat H_{\mathrm{true}}\tau}-e^{-i\hat H_{\mathrm{lin}}\tau}\| = \left\|-i\int_0^\tau
e^{-i\hat H_{\mathrm{true}}(\tau-s)}
\hat R_H
e^{-i\hat H_{\mathrm{lin}}s}
\,ds \right\| \le \tau\|\hat R_H\|,
\end{equation}
since conjugation by unitaries does not change \(\|\hat R_H\|\).
A telescoping expansion of the product therefore yields
\begin{equation}
\|\hat U_{\mathrm{true}}-\hat U_{\mathrm{lin}}\|
\le \sum_{\ell=1}^{M+1}\tau_\ell\,\|\hat R_H\|
= t_2\,\|\hat R_H\|
=O\!\left(\frac{t_2}{t_1^2}\right).
\label{eq:unitary_mismatch_pulses_short}
\end{equation}
Thus the induced change in the output state is \(O(t_2/t_1^2)\), which is
negligible under the condition \(t_2/t_1^2\to 0\).

\paragraph*{Putting the terms together.}
Under \(\varepsilon_g,\varepsilon_\alpha=O(1/t_1)\), the stability bound \eqref{eq:Delta_gamma_bound_app} implies
\(\tilde\gamma^2=\gamma^2+O(1/t_1)\).
Thus the stage-2 variance satisfies
\begin{equation}
\mathrm{Var}\!\left((\nabla\tilde q\cdot\delta\bs\theta)_{\mathrm{est}}\right)
=
\frac{\tilde\gamma^2}{t_2^2}
=
\frac{\gamma^2(F\hat{\bs g}\mid\nabla q)}{t_2^2}
+O\!\left(\frac{1}{t_1 t_2^2}\right).
\end{equation}
Adding the squared bias from the Taylor remainder \(R_q\) gives the overall mean-squared error
\begin{equation}
\mathrm{MSE}(q_{\mathrm{est}})
=
\frac{\gamma^2(F\hat{\bs g}\mid\nabla q)}{t_2^2}
+O\!\left(\frac{1}{t_1 t_2^2}\right)
+O\!\left(\frac{1}{t_1^4}\right),
\label{eq:MSE_total_app}
\end{equation}
with the additional error from \(\hat R_H\) suppressed under the condition
\(t_2/t_1^2\to 0\) discussed above.

Finally, choosing \(t_2\sim t\) and \(t_1\sim t^p\) with \(1/2<p<1\) ensures
\(t_1/t_2\to 0\) and \(t_2/t_1^2\to 0\), so that \eqref{eq:MSE_total_app} reduces asymptotically to the optimal bound \(\mathrm{MSE}(q_{\mathrm{est}})\sim \gamma^2(F\hat{\bs g}\mid\nabla q)/t^2\) in Eqs.~\eqref{eq:QCRB_general}, and \eqref{eq:QCRB_general_app}.

\section{Examples}
\label{app:examples}

In this Appendix we collect several instances in which the SDP
\begin{align}
\gamma(\hat{\bs g}\mid\bs\alpha)
:= \min_{\hat A}\quad & \frac12\|\hat A\|_1
\nonumber\\
\text{s.t.}\quad
& \Tr(\hat A\hat g_j)=\alpha_j,\quad j=1,\dots,m,
\nonumber\\
& \Tr(\hat A)=0,
\label{eq:gamma_def_examples}
\end{align}
can be solved analytically.
Throughout, we use standard Pauli operators
\(\hat X,\hat Y,\hat Z\) with eigenvalues \(\pm 1\).
For an \(n\)-qubit Pauli string \(\hat P\), we will use
\(\hat P^2=\hat I\), \(\Tr(\hat P)=0\), and \(\Tr(\hat P\hat P')=2^n\delta_{\hat P,\hat P'}\).

\subsection{Single qubit: Pauli-vector model}
\label{app:examples:1q}

Consider a single qubit with
\begin{equation}
\hat H_0=\sum_{j=1}^3 \theta_j\,\hat\sigma_j,
\qquad
q(\bs\theta)=\bs\alpha\cdot\bs\theta,
\end{equation}
where \(\hat\sigma_j\in\{\hat X,\hat Y,\hat Z\}\).
We compute \(\gamma(\hat{\bs\sigma}\mid\bs\alpha)\) from \eqref{eq:gamma_def_examples}.
Any Hermitian traceless \(\hat A\) can be written as \(\hat A=\bs a\cdot\hat{\bs\sigma}\).
Using \(\Tr(\hat\sigma_i\hat\sigma_j)=2\delta_{ij}\), the constraints \(\Tr(\hat A\hat\sigma_j)=\alpha_j\)
imply \(2a_j=\alpha_j\), hence
\begin{equation}
\hat A^\star=\frac12\,\bs\alpha\cdot\hat{\bs\sigma}.
\end{equation}
Its eigenvalues are \(\pm \|\bs\alpha\|_2/2\), so
\begin{equation}
\|\hat A^\star\|_1=\|\bs\alpha\|_2
\qquad\Longrightarrow\qquad
\gamma(\hat{\bs\sigma}\mid\bs\alpha)=\frac{\|\bs\alpha\|_2}{2}.
\end{equation}
Therefore the bound becomes
\begin{equation}
\mathrm{Var}(q_{\mathrm{est}})
\ \ge\
\frac{\|\bs\alpha\|_2^2}{4t^2}.
\end{equation}

\subsection{Complete Hilbert--Schmidt orthonormal traceless basis}
\label{app:examples:orthonormal}
Consider the Hamiltonian
\begin{equation}
    \hat H_0=\sum_{j=1}^{N^2-1} \theta_j\,\hat g_j,
    \qquad
    q(\bs\theta)=\bs\alpha\cdot\bs\theta,
\end{equation}
and let \(\{\hat g_j\}_{j=1}^{N^2-1}\) be a Hilbert--Schmidt orthonormal basis of traceless Hermitian matrices (e.g. generalized Gell-Mann matrices or properly normalized Pauli strings):
\begin{equation}
\Tr(\hat g_i\hat g_j)=\delta_{ij},
\qquad
\Tr(\hat g_j)=0.
\end{equation}
Then any Hermitian traceless \(\hat A\) has a unique expansion
\(\hat A=\sum_{j=1}^{N^2-1} a_j \hat g_j\), and the constraints
\(\Tr(\hat A\hat g_j)=\alpha_j\) enforce \(a_j=\alpha_j\).
Thus the feasible \(\hat A\) is unique:
\begin{equation}
\hat A^\star=\sum_{j=1}^{N^2-1}\alpha_j \hat g_j,
\end{equation}
and
\begin{equation}
\gamma(\hat{\bs g}\mid\bs\alpha)=\frac12\left\|\sum_{j=1}^{N^2-1}\alpha_j \hat g_j\right\|_1.
\end{equation}

\subsection{Estimating a single Pauli coefficient in a multi-Pauli Hamiltonian}
\label{app:examples:single_param_pauli}

Consider an \(n\)-qubit Hamiltonian with multiple Pauli-string generators,
\begin{equation}
\hat H_0 =\sum_{j=1}^m \theta_j\,\hat P_j,
\label{eq:H_multi_pauli}
\end{equation}
where each \(\hat P_j\) is an \(n\)-qubit Pauli string (\(\hat P_j^2=\hat I\), \(\Tr(\hat P_j)=0\)),
and assume the set \(\{\hat P_j\}\) contains no duplicates up to a sign.
We are interested in estimating only one coefficient, say \(\theta_{j_0}\), i.e.
\begin{equation}
q(\bs\theta)=\theta_{j_0},
\qquad
\bs\alpha = \bs e_{j_0},
\end{equation}
where \(\bs e_{j_0}\) is the standard basis vector.

The SDP \eqref{eq:gamma_def_examples} becomes
\begin{align}
\gamma(\{\hat P_j\}\mid \bs e_{j_0})
=\min_{\hat A}\quad & \frac12\|\hat A\|_1
\nonumber\\
\text{s.t.}\quad &
\Tr(\hat A \hat P_{j_0})=1,\qquad
\Tr(\hat A \hat P_{j})=0\ \ (j\neq j_0),
\nonumber\\
& \Tr(\hat A)=0.
\label{eq:single_coeff_sdp}
\end{align}
A feasible operator is obtained by taking
\begin{equation}
\hat A^\star=\frac{1}{2^n}\,\hat P_{j_0},
\label{eq:A_star_single_coeff}
\end{equation}
since \(\Tr(\hat P_{j_0}^2)=2^n\) gives \(\Tr(\hat A^\star \hat P_{j_0})=1\), while orthogonality of distinct Pauli
strings implies \(\Tr(\hat P_{j_0}\hat P_j)=0\) for \(j\neq j_0\), and \(\Tr(\hat A^\star)=0\).

The eigenvalues of \(\hat A^\star\) are \(\pm 2^{-n}\) each with multiplicity \(2^{n-1}\), hence
\begin{equation}
\|\hat A^\star\|_1 = 2^n \cdot 2^{-n} = 1,
\qquad\Longrightarrow\qquad
\gamma(\{\hat P_j\}\mid \bs e_{j_0}) \le \frac12.
\end{equation}
To see this is optimal, use the dual form
\[
\gamma(\hat{\bs g}\mid\bs\alpha)
=\max_{\bs y,\mu}\left\{\bs\alpha\cdot\bs y:\left\|\sum_{j=1}^m y_j \hat P_j+\mu\hat I\right\|_\infty\le \frac12\right\}.
\]
Choose \(\mu=0\) and \(\bs y=\frac12\,\bs e_{j_0}\). Since \(\|\hat P_{j_0}\|=1\), this point is feasible and yields
\(\bs\alpha\cdot\bs y = \frac12\). Therefore,
\begin{equation}
\gamma(\{\hat P_j\}\mid \bs e_{j_0}) = \frac12,
\quad\text{and}\quad
\mathrm{Var}(\theta_{j_0,\mathrm{est}})\ \geq \frac{1}{4t^2}.
\end{equation}
Notably, this value is independent of the remaining generators \(\{\hat P_j\}_{j\neq j_0}\).

\subsection{Commuting local generators: \texorpdfstring{$m$}{m} independent \texorpdfstring{$\hat Z$}{Z} operators}
\label{app:examples:mZ_commuting}

Consider the set up studied in \cite{eldredge_optimal_2018} with the Hamiltonian 
\begin{equation}
    \hat H_0 =\sum_{j=1}^m \theta_j\,\hat Z_j,
\end{equation}
where $Z_j$ is the pauli Z operator on qubit $j$, and \(q(\bs\theta)=\bs\alpha\cdot\bs\theta\).
Then
\begin{equation}
{\ \gamma(\hat Z_1,\dots,\hat Z_m\mid \bs\alpha)=\frac{\|\bs\alpha\|_\infty}{2}\ },
\qquad
\Rightarrow\qquad
\mathrm{Var}(q_{\mathrm{est}})\ge \frac{\|\bs\alpha\|_\infty^2}{4t^2}.
\end{equation}

\begin{proof}
Use the dual characterization of \(\gamma\):
\begin{equation}
\gamma(\hat{\bs g}\mid\bs\alpha)
=
\max_{\bs y\in\mathbb{R}^m,\ \mu\in\mathbb{R}}
\bs\alpha\cdot\bs y:\left\|\sum_{j=1}^m y_j\hat Z_j+\mu\hat I\right\|\le \frac12.
\label{eq:mZ_dual_short}
\end{equation}
Since all \(\hat Z_j\) commute, they are diagonal in the computational basis
\(\ket{\bs z}\) with \(\bs z\in\{\pm1\}^m\) and \(\hat Z_j\ket{\bs z}=z_j\ket{\bs z}\). Hence
\begin{equation}
\left\|\sum_{j=1}^m y_j\hat Z_j+\mu\hat I\right\|_\infty
=
\max_{\bs z\in\{\pm1\}^m}\left|\mu+\sum_{j=1}^m y_j z_j\right|.
\end{equation}
Because the spectrum contains both \(\mu+s\) and \(\mu-s\) (take \(\bs z\) and \(-\bs z\)),
the minimizer over \(\mu\) is \(\mu=0\), giving the equivalent constraint
\begin{equation} 
\max_{\bs z}\left|\sum_{j=1}^m y_j z_j\right|\le \frac12.
\end{equation}
But \(\max_{\bs z}|\sum_j y_j z_j|=\sum_j |y_j|=\|\bs y\|_1\) (choose \(z_j=\mathrm{sgn}(y_j)\)), so
\eqref{eq:mZ_dual_short} reduces to
\begin{equation}
    \gamma(\hat Z_1,\dots,\hat Z_m\mid \bs\alpha)=\max_{\|\bs y\|_1\le 1/2}\ \bs\alpha\cdot\bs y.
\end{equation}
and,
\begin{equation}
    \max_{\|\bs y\|_1\le 1/2}\ \bs\alpha\cdot\bs y=\frac12\|\bs\alpha\|_\infty,
\end{equation}
achieved by \(\bs y=\frac12\,\mathrm{sgn}(\alpha_k)\,\bs e_k\) for
\(k =\arg\max_j|\alpha_j|\). This proves the claim.
\end{proof}

\subsection{Two-qubit noncommuting example: \texorpdfstring{$\hat Z_1,\hat X_1,\hat Z_1\hat Z_2$}{Z1, X1, Z1Z2}}
\label{app:examples:2q_partial_noncommuting}

Consider the Hamiltonian 
\begin{equation}\label{eq:gens_2q_example}
    \hat{H}_0 = \theta_1 \hat Z_1 + \theta_2 \hat X_1 + \theta_3 \hat Z_1\hat Z_2,
\end{equation}
with $q(\bs\theta)=\bs\alpha\cdot\bs\theta$. Although the generators do not commute (e.g. \([\hat Z_1,\hat X_1]\neq 0\) and \([\hat X_1,\hat Z_1\hat Z_2]\neq 0\)),
they all commute with \(\hat Z_2\), which enables block diagonalization, and we can show 
\begin{equation}
\gamma(\hat Z_1,\hat X_1,\hat Z_1\hat Z_2\mid \alpha_1,\alpha_2,\alpha_3)
=
\frac12\sqrt{\alpha_2^2+\max\{|\alpha_1|,|\alpha_3|\}^2}.
\label{eq:gamma_closed_2q}
\end{equation}

\begin{proof}
We use the dual representation:
\begin{align}
\gamma(\hat{\bs g}\mid\bs\alpha)
=
\max_{\bs y\in\mathbb{R}^3,\ \mu\in\mathbb{R}}\quad &
\bs\alpha\cdot\bs y
\nonumber\\
\text{s.t.}\quad &
\left\|y_1\hat Z_1+y_2\hat X_1+y_3\hat Z_1\hat Z_2+\mu\hat I\right\|
\le \frac12 .
\label{eq:dual_2q_start}
\end{align}

\emph{Step 1: block diagonalization.}
All terms in \eqref{eq:dual_2q_start} commute with \(\hat Z_2\), so in the \(\hat Z_2\) eigenbasis the operator is
block diagonal with two \(2\times 2\) blocks acting on qubit~1:
\begin{equation}
y_1\hat Z_1+y_2\hat X_1+y_3\hat Z_1\hat Z_2+\mu\hat I
=
\begin{pmatrix}
\mu\hat I + (y_1+y_3)\hat Z + y_2\hat X & 0\\
0 & \mu\hat I + (y_1-y_3)\hat Z + y_2\hat X
\end{pmatrix}.
\end{equation}
Define
\begin{equation}
r_\pm:=\sqrt{(y_1\pm y_3)^2+y_2^2}.
\end{equation}
Each block has eigenvalues \(\mu\pm r_\pm\), hence its operator norm equals \(\max\{|\mu+r_\pm|,|\mu-r_\pm|\}=|\mu|+r_\pm\).
Therefore the full constraint is equivalent to
\begin{equation}
|\mu|+\max\{r_+,r_-\}\ \le\ \frac12.
\end{equation}
For any fixed \((y_1,y_2,y_3)\), the left-hand side is minimized at \(\mu=0\), so we may set \(\mu^\star=0\) and obtain the
equivalent constraints
\begin{equation}
r_+\le \frac12,
\qquad
r_-\le \frac12.
\label{eq:two_disks}
\end{equation}

\emph{Step 2: change of variables.}
Let \(u:=y_1+y_3\) and \(v:=y_1-y_3\). Then \eqref{eq:two_disks} becomes two disk constraints
\begin{equation}
u^2+y_2^2\le \frac14,
\qquad
v^2+y_2^2\le \frac14,
\label{eq:uv_disks}
\end{equation}
and the objective becomes
\begin{equation}
\bs\alpha\cdot\bs y
= \alpha_1\frac{u+v}{2}+\alpha_3\frac{u-v}{2}+\alpha_2 y_2
= p\,u + q\,v + \alpha_2 y_2,
\end{equation}
where \(p:=(\alpha_1+\alpha_3)/2\) and \(q:=(\alpha_1-\alpha_3)/2\).

\emph{Step 3: maximize over \(u,v\) for fixed \(y_2\).}
Fix \(y_2\) with \(|y_2|\le 1/2\) and set \(R:=\sqrt{\tfrac14-y_2^2}\).
Then \eqref{eq:uv_disks} implies \(|u|\le R\) and \(|v|\le R\), and the maximum of \(p\,u+q\,v\) is achieved at
\(u=R\,\mathrm{sgn}(p)\), \(v=R\,\mathrm{sgn}(q)\), yielding
\begin{equation}
\max_{u,v} \ (p\,u+q\,v+\alpha_2 y_2)
=
\alpha_2 y_2 + R\,(|p|+|q|).
\end{equation}
Thus
\begin{equation}
\gamma(\hat{\bs g}\mid\bs\alpha)
=
\max_{|y_2|\le 1/2}\Big(\alpha_2 y_2 + (|p|+|q|)\sqrt{\tfrac14-y_2^2}\Big).
\label{eq:1d_max}
\end{equation}

\emph{Step 4: solve the remaining one-dimensional maximization.}
Let \(y_2=\tfrac12\sin\theta\), so \(\sqrt{\tfrac14-y_2^2}=\tfrac12\cos\theta\).
Then the right-hand side of \eqref{eq:1d_max} becomes
\begin{equation}
\frac12\Big(\alpha_2\sin\theta + (|p|+|q|)\cos\theta\Big),
\end{equation}
whose maximum over \(\theta\) equals \(\tfrac12\sqrt{\alpha_2^2+(|p|+|q|)^2}\).
Finally, using the identity
\begin{equation}
|p|+|q|
=
\frac{|\alpha_1+\alpha_3|+|\alpha_1-\alpha_3|}{2}
=
\max\{|\alpha_1|,|\alpha_3|\},
\end{equation}
we obtain \eqref{eq:gamma_closed_2q}.
\end{proof}

\end{document}